\newtheorem{theorem}{Theorem}
\theoremstyle{plain}	\newtheorem{lemma}{Lemma}
\newcommand\BibTeX{{\rmfamily B\kern-.05em \textsc{i\kern-.025em b}\kern-.08em
T\kern-.1667em\lower.7ex\hbox{E}\kern-.125emX}}
\newcommand{\R}{\mathbb{R}}
\newcommand{\E}{\mathbb{E}}
\newcommand{\bX}{\mathbf{X}}
\newcommand{\bY}{\mathbf{Y}}
\newcommand{\bT}{\mathbf{T}}
\newcommand{\given}{\: \vert \:}
\newcommand{\abs}[1]{\lvert#1\rvert}
\newcommand{\norm}[1]{\lVert#1\rVert}
\newcommand{\cind}{\perp}          
\newcommand{\BODM}{B}
\newcommand{\betao}{\beta'}
\newcommand{\ones}{\mathbf{1}}
\begin{document}


\title{Outcome Based Matching}

\author{Jonathan Bates$^{2,3,4}$ , Alexander Cloninger$^{1,4}$}


\maketitle

\begin{abstract}
We propose a method to reduce variance in treatment effect estimates in the setting of high-dimensional data.
In particular, we introduce an approach for learning a metric to be used in matching treatment and control groups.  The metric reduces variance in treatment effect estimates by weighting covariates related to the outcome and filtering out unrelated covariates.
\end{abstract}


\footnotetext[1]{Department of Mathematics, University of California, San Diego, 9500 Gillman Dr, La Jolla, CA 92093}

\footnotetext[2]{
	Manager, American Express, 200 Vesey St, New York NY 10285 
}
\footnotetext[3]{
	Lecturer, Yale School of Medicine, 333 Cedar Street, New Haven CT 06510
}
\footnotetext[4]{
	Center for Outcomes Research and Evaluation, Yale-New Haven Hospital, New Haven CT
}


\section{Introduction}

A reliable estimate of a treatment effect in the observational setting rests on careful correction for imbalance between treatment and control groups.
Researchers often apply matching to assemble a study sample with balanced treatment and control groups,
followed by estimation of the treatment effect for the balanced sample.
Matching has much intuitive appeal.  Ideally, matching yields pairs of similar individuals, one from the treatment group and one from the control group.  
In aggregate, the initial similarity of these pairs lends strong evidence to the 
effect of treatment if we observe significantly different outcomes.  This point of view suggests that ``similarity'' means \emph{similar in risk} with respect to the outcome we have in mind.

Common matching methods include propensity score matching (PSM) and Mahalanobis distance matching (MDM).
However, both exhibit some undesirable properties, 
and the need to develop a more robust form of matching method has been noted for some time \cite{gu1993comparison}.  For example,
recently King and Nielson \cite{kingnielson} showed that PSM can lead to highly variable 
treatment effect estimates as one prunes more units, called the ``propensity score paradox''.
Moreover, they demonstrate that PSM can at best reconstruct a randomized treatment assignment setting.

We propose to match units on a distance that considers the influence of each covariate on the outcome.  In particular, we propose that the similarity imposed on a given covariate should be proportional to its influence on the outcome.  This is in contrast to Mahalanobis distance, which imposes similarity on a given covariate according to its variance in the data.  This is particularly important in the Big Data setting (i.e.\ $p \gg 1$), where many or most covariates have no significant influence on the outcome, which we call \emph{noise} covariates.  By weighting covariates by their influence, these noise covariates may be eliminated.  However, without this weighting, which is the case with MDM, the influence of the noise covariates leads to poorly matched groups.  Our approach is supported by simulations in \cite{brookhart2006variable}, which show that outcome covariates should be included in the propensity model, 
that covariates that predict exposure, but not outcome, increase variance in the treatment effect estimate, and therefore, matching should be done on outcome-related covariates only.  Thus, we hypothesize that our matching distance produces a treatment effect estimator with less variance.  Furthermore, we demonstrate through simulations that weighting the influence of covariates with respect to outcome reduces error, specifically variance, in treatment effect estimates.  Lower variance means estimate of treatment effectiveness can be trusted locally.


\section{Background}

\subsection{Why Match? An Example of Confounding}

Consider a scenario of strong confounding with one binary risk factor $X$, one binary ``placebo'' treatment $T$, and one binary outcome $Y$.  These occur according to the following probabilities
\begin{align*}
P(X=1) &= 0.25 \\
P(T=1 \mid X=1) &= 0.95 \\
P(T=1 \mid X=0) &= 0.05 \\
P(Y=1 \mid X=1) &= 0.95 \\
P(Y=1 \mid X=0) &= 0.05 \,.
\end{align*}

The outcome does not depend on the treatment, that is, the true treatment effect is 0.  
However, confounding will lead to erroneously high treatment effect estimates for the regression estimator, but matching prior to estimating the treatment effect mitigates this bias.

Specifically, we generate $N=200$ units $\{(X_i, T_i, Y_i)\}_{i=1}^N$.
Then we fit $\E\{Y \given X=x, T=t\} = g(x,t) = \mathrm{logistic}(\beta_0 + \beta x + \gamma_0 t)$ using logistic regression.  
We estimate the treatment effect as $\frac{1}{N} \sum_{i=1}^N g(X_i,1) - g(X_i,0)$.
After 1000 runs, the 95\% bootstrap confidence interval for the treatment effect is $(0.09, 0.29)$, suggesting that the treatment has a positive effect on the outcome.   For comparison, we also estimate the treatment effect by using exact 1-1 matching\footnote{Each treated unit is uniquely matched to a control unit with the same risk factor.} followed by using the regression estimator.  This gives a confidence interval of $(-0.13, 0.04)$ for the treatment effect, which covers the ground truth treatment effect of 0.
\footnote{cf. IPython Notebook \texttt{\detokenize{example_treatment_effect_under_placebo_A.ipynb}}}


%

\subsection{The average treatment effect on the treated (ATT)}

Consider a sample with $N$ units and potential outcomes data 
$\{(X_i, T_i, Y_i(1), Y_i(0) )\}_{i=1}^N$, 
where $X_i$ is a vector of $p$ pretreatment covariates for unit $i$, $T_i \in \{0, 1\}$
is the treatment indicator, and $Y_i(1), Y_i(0)$ are the potential outcomes for unit $i$
under treatment and control, respectively.  Herein, we make the stable unit treatment value assumption (SUTVA) that the potential outcomes are fixed for each unit. 
The treatment effect for unit $i$ is defined to be
$\tau_i := Y_i(1) - Y_i(0)$, and the average treatment effect is given by 
$\E \{ Y_i(1) - Y_i(0) \}$. Our principle aim is to estimate the average treatment effect on the treated (ATT)
\begin{equation}
\tau := \E \{ Y_i(1) - Y_i(0) \given T_i = 1 \} \,.
\end{equation}
However, as we never observe $Y_i(0)$ for the treatment group, 
we cannot estimate the expectation directly. 

We often estimate $\tau$ by regression or by estimating the counterfactual outcome $Y_i(1-T_i)$ through matching, or by some combination of both, such as matching followed by regression. Standard matching approaches, including PSM and MDM, proceed by pairing treated and control units that are close in the space of pretreatment covariates with respect to a balancing score $b$.  For example, in practice $X_i$ is matched to $X_{m_i}$, where $T_{m_i} = 1-T_i$, if $b(X_i)$ and $b(X_{m_i})$ are sufficiently close.
Formally, a \emph{balancing score} $b$ is defined to be a function such that
$P(X \given b(X), T=0) = P(X \given b(X), T=1)$, equivalently, $X \cind T \given b(X)$ (cf.\ \cite{rr83}).
The finest balancing score is the identity map 
$\mathrm{id}_\mathcal{X}(X) := X$, and Rosenbaum and Rubin \cite{rr83} 
prove that the coarsest balancing score 
is the propensity score $\pi(X) := P(T=1 \given X)$.

That matching on a balancing score leads to a consistent estimate of the ATT requires several assumptions on the treatment assignment.
A treatment assignment is said to be \emph{regular} if it is individualistic, probabilistic, and unconfounded (cf.\,\cite{imbensrubinbook}).  \emph{Individualistic} assignment asserts that the probability of all treatments conditioned on all covariates and potential outcomes factors over the individual units as a fixed function of a unit's covariates and potential outcomes.
\emph{Probabilistic} assignment asserts that $0 < \pi(X_i) < 1$ for all $i$, 
where $\pi(X) := P(T=1 \given X)$ is called the \emph{propensity score}.
\emph{Unconfounded} assignment asserts that the potential outcomes and treatment assignment 
for a unit are conditionally independent given the unit's covariates, that is, $(Y(1),Y(0)) \cind T \given X$.
In this work, we assume that treatment assignment is regular.
See Appendix \ref{s:reg_assign} for a more formal review.

The following theorem justifies the process of matching on a balancing score
to estimate the ATT.
\begin{theorem}[Rosenbaum \& Rubin \cite{rr83}]
	Suppose treatment assignment is regular and $b(X)$ is a balancing score.
	Then
	\begin{equation}
	\tau = \E \{ \, \E \{ Y_i \given b(X_i), T_i=1 \}
	- \E \{ Y_i \given b(X_i), T_i=0 \} \given T_i = 1 \, \} 
	\end{equation}
	where $Y_i = T_i Y_i(1) + (1-T_i) Y_i(0)$ is the observed outcome.
\end{theorem}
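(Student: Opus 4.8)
The plan is to reduce the entire statement to a single structural fact: that unconfoundedness, which is assumed only relative to the full covariate vector $X$, is inherited by the coarser balancing score $b(X)$. Concretely, I would first prove the key lemma that $(Y_i(1), Y_i(0)) \cind T_i \given b(X_i)$. Once this ``balancing-score unconfoundedness'' is in hand, the displayed identity follows from a short sequence of conditional-expectation manipulations, so the real content lives in the lemma.

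To prove the lemma, I would begin by noting that the balancing-score property $X \cind T \given b(X)$ forces the propensity score to be a function of $b(X_i)$ alone. Since $b(X)$ is a function of $X$, we have $P(T_i=1 \given X_i) = P(T_i=1 \given X_i, b(X_i)) = P(T_i=1 \given b(X_i))$, so $\pi := P(T_i = 1 \given X_i)$ is constant on the level sets of $b$. Next I would expand $P(T_i=1 \given Y_i(1), Y_i(0), b(X_i))$ by the tower property, conditioning first on the finer information carried by $X_i$:
\[
P(T_i=1 \given Y_i(1), Y_i(0), b(X_i)) = \E\{\, P(T_i=1 \given Y_i(1), Y_i(0), X_i) \given Y_i(1), Y_i(0), b(X_i) \,\}.
\]
Unconfoundedness given $X$ collapses the inner term to $\pi(X_i)$, and the preceding observation that $\pi(X_i) = P(T_i=1 \given b(X_i))$ is already $b(X_i)$-measurable lets the outer conditional expectation pass through unchanged, yielding $P(T_i=1 \given b(X_i))$. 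Because this limiting probability does not involve the potential outcomes at all, we conclude $(Y_i(1), Y_i(0)) \cind T_i \given b(X_i)$.

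With the lemma in place I would finish by rewriting the two inner conditional expectations in terms of potential outcomes. On $\{T_i=1\}$ we have $Y_i = Y_i(1)$, so $\E\{Y_i \given b(X_i), T_i=1\} = \E\{Y_i(1) \given b(X_i), T_i=1\}$, and balancing-score unconfoundedness removes the conditioning on $T_i$ to give $\E\{Y_i(1) \given b(X_i)\}$; symmetrically $\E\{Y_i \given b(X_i), T_i=0\} = \E\{Y_i(0) \given b(X_i)\}$. Their difference is $\E\{Y_i(1) - Y_i(0) \given b(X_i)\}$, and one further application of the tower property together with the lemma gives
\[
\E\{\, \E\{Y_i(1)-Y_i(0) \given b(X_i)\} \given T_i=1 \,\} = \E\{Y_i(1)-Y_i(0) \given T_i=1\} = \tau,
\]
which is the asserted identity.

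The crux is the key lemma; everything after it is bookkeeping. The main obstacle I anticipate is executing the iterated expectation cleanly — specifically, justifying that $\pi(X_i)$ is genuinely measurable with respect to $b(X_i)$ and may therefore be pulled out of the outer conditional expectation, since this is exactly the step where both hypotheses (unconfoundedness given $X$ and the balancing-score property) are forced to interact. Keeping careful track of conditioning on a \emph{function} of $X$ versus on $X$ itself is the only place where the argument could silently break.
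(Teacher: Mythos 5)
The paper gives no proof of this theorem --- it is quoted from Rosenbaum and Rubin \cite{rr83} --- so there is nothing internal to compare against; measured against the cited source, your argument is correct and is essentially the canonical one. Your key lemma, that unconfoundedness is inherited by any balancing score, i.e.\ $(Y_i(1),Y_i(0)) \cind T_i \given b(X_i)$, proved by first showing $\pi(X_i)$ is a function of $b(X_i)$ and then collapsing the iterated expectation, is exactly Rosenbaum and Rubin's Theorems 2--3, and the remaining conditional-expectation bookkeeping is the standard route to the displayed identity. The one point you leave implicit is where \emph{probabilistic} assignment enters: $0<\pi(X)<1$ is what guarantees that $\E \{ Y_i \given b(X_i), T_i=0 \}$ is well defined on the support of $b(X_i)$ given $T_i=1$, so it is worth a sentence when you execute the ``bookkeeping'' step.
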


\subsection{Matching to estimate the ATT}

Consider an arbitrary linear map $X \mapsto X A$ and 
associated distance\footnote{Or, a pseudo-metric if $A$ is not invertible.} 
\begin{equation}
d_A(X_i,X_j) := \norm{X_i A - X_j A}_2 \,.
\end{equation}
We use greedy matching without replacement\footnote{
	This algorithm sequentially defines the match $m_i$ for each treated unit $i$ to be the nearest control unit with resepect to $d_A$ that has not yet been matched to another treated unit (cf.\ Section 18.4 in \cite{imbensrubinbook}).} with $d_A$ to construct 
an injective map $m : I_t \to I_c$ from treated to control units.
\footnote{$I_t := \{i \mid T_i=1\}$ and $I_c := \{i \mid T_i=0\}$.}
Furthermore, given a caliper $\delta>0$, we define 
$I_t^{\delta} := \{ i \mid d_A(X_i,X_{m_i}) \leq \delta \}$.
We then estimate the ATT as
\begin{equation}
\hat{\tau}^{\mathrm{match}}_{A,\delta}
= \frac{1}{\abs{I_t^\delta}} \sum_{i \in I_t^\delta} Y_i  -  Y_{m_i} \,.
\end{equation}

For example, with $A=I$, the identity, and $\delta=0$, $\hat{\tau}^{\mathrm{match}}_{I,0}$ is the exact matching estimator.  Assuming regular treatment assignment, this estimator is a consistent estimator for $\tau$.  More generally, if $X \mapsto X A$ is a balancing score, then $\hat{\tau}^{\mathrm{match}}_{I,0}$ is a consistent estimator for $\tau$.
In practice, exact matching is rarely possible and so one must find close matches, that is, nonzero $\delta$, to estimate ATT.

%
%
%
%
When exact matching is not feasible, model-based approaches are often used to reduce the bias from the discrepancy $X_i - X_{m_i}$ between matches.
Specifically, after computing the matches $m : I_t \to I_c$, we regress on some specification of the outcome, for example, 
$Y_i = \beta_0 + X_i \beta + \gamma_0 T_i + T_i (X_i-\bar{X}) \gamma + \epsilon_i$, 
on the pooled sample of $2 \abs{I_t^\delta}$ units $I_t^\delta \cup m(I_t^\delta)$. 
The least squares coefficient for treatment from the regression provides an estimate of the ATT:
\begin{equation}
\hat{\tau}^{\mathrm{reg}} 
= \hat{\gamma}_0 \,.
\end{equation}
This specific approach to bias correction is called \emph{parallel regressions on covariates}.

\subsection{Limitations of PSM}

Propensity score matching (PSM) corresponds to matching with respect to the map 
$X \mapsto X \alpha$, assuming the specification $\pi(X) = \mathrm{logistic}(X \alpha)$.
Matching on the propensity score can at best reconstruct a randomized treatment assignment setting, and as such has been criticized for its inefficiency \cite{kingnielson}.

\subsection{Limitations of MDM}

Mahalanobis distance matching (MDM) corresponds to matching with respect to the map $X \mapsto X R$, where $R R^T = \hat{\Sigma}^{-1}$ is any Choleski decomposition and $\hat{\Sigma}$ is the sample covariance matrix for $X$.

MDM breaks down when erroneous measurements are present or in the case of a rare condition, both of which occur commonly in health record data.
Gu and Rosenbaum point these out \cite{gu1993comparison}. For example, a lab value may be miscoded with an extremely high value, leading to high variance for the covariate.  The high variance diminishes the influence of this covariate in the distance, although it may be highly relevant to the outcome of interest.  
On the other hand, a rarely coded event, say, history of falling, will evaluate to have a very low variance, and MDM will essentially force exact matching on this covariate, to the point of throwing the sample out if there are no like controls to be found.



%


\section{Outcomes Based Matching}

We propose first inferring the influence of each pretreatment covariate on the outcome
and then constructing a distance for matching that weights each covariate by its outcome-specific influence.
First, let $\bX = (\ones, \bX_1, \dotsc, \bX_p) \in \R^{N\times (p+1)}$, where $\bX_j$ 
is a column vector of the $j$th pretreatment covariates, and let $\mathbf{Y} \in \R^N$ denote the outcomes.
Let $\betao$ be the ordinary least squares estimate\footnote{
	More generally, 
	$\betao := \bX^\dagger \bY$, 
	where $\bX^\dagger = \lim_{\lambda \to 0} (\bX^T \bX + \lambda I)^{-1} \bX^T$
	is the Moore-Penrose pseudoinverse.}
$\arg \min_{\beta} \, \norm{\mathbf{Y} - \bX\beta}_2$.
With $\BODM := \mathrm{diag}(\abs{\betao_0}^{1/2}, \abs{\betao_1}^{1/2}, \dotsc, \abs{\betao_p}^{1/2})$, 
we consider matching with respect to the map
\begin{eqnarray*}
	X &\mapsto& X \BODM \\
	d_{\BODM}(X_i, X_j) &=& \norm{(X_i-X_j) \BODM}_2  .
\end{eqnarray*}
Note that if $\betao_i = 0$, then $d_{\BODM}$ is invariant to the $i$th pretreatment covariate.  

First, suppose the $X_i$ are iid and 
\begin{align}
\begin{split}
T_i &\sim \mathrm{Bernoulli}(\pi(X_i)) \\
Y_i &= X_i \beta+T_i \gamma_0 + \epsilon_i
\end{split}
\tag{$\ast$}
\label{eq:cond2}
\end{align}
where $\epsilon_i$ is iid random noise with mean 0.

%
%
%
\begin{theorem}\label{thm:regression}
	Under \eqref{eq:cond2},
	\begin{equation}
	\E \betao = \beta + \gamma_0 \,\E_{\bX} [ \bX^\dagger \pi(\bX) ]
	\end{equation}
\end{theorem}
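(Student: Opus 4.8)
The plan is to write the specification \eqref{eq:cond2} in matrix form, substitute it into $\betao = \bX^\dagger \bY$, and split the result into three pieces whose expectations I compute separately by conditioning on $\bX$. Writing $\bT = (T_1,\dots,T_N)^T$ for the treatment vector, $\pi(\bX) = (\pi(X_1),\dots,\pi(X_N))^T$, and letting $\epsilon = (\epsilon_1,\dots,\epsilon_N)^T$ denote the noise vector, the model \eqref{eq:cond2} reads $\bY = \bX\beta + \gamma_0 \bT + \epsilon$. Since $Y \mapsto \bX^\dagger Y$ is linear, applying $\bX^\dagger$ gives
\begin{equation*}
\betao = \bX^\dagger \bY = \bX^\dagger \bX \, \beta + \gamma_0 \, \bX^\dagger \bT + \bX^\dagger \epsilon \,,
\end{equation*}
and I would then take the total expectation of each of the three terms in turn.

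For the first term I would argue that $\bX$ has full column rank almost surely, so that $\bX^\dagger \bX = I_{p+1}$ and the term equals $\beta$ deterministically, contributing $\beta$ to $\E\betao$. For the second and third terms the key device is the tower property: conditioning on $\bX$ freezes the random matrix $\bX^\dagger$ into a fixed linear map that may be pulled outside the inner expectation. Given $\bX$, the treatment indicators are independent with $\E\{T_i \given \bX\} = \pi(X_i)$, so $\E\{\bT \given \bX\} = \pi(\bX)$ and hence $\E\{\gamma_0 \bX^\dagger \bT \given \bX\} = \gamma_0 \bX^\dagger \pi(\bX)$; taking the outer expectation over $\bX$ yields $\gamma_0 \, \E_\bX[\bX^\dagger \pi(\bX)]$. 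Likewise, since the noise is mean zero and independent of the covariates, $\E\{\epsilon \given \bX\} = 0$, so $\E\{\bX^\dagger \epsilon \given \bX\} = \bX^\dagger \cdot 0 = 0$ and the third term vanishes. Summing the three contributions gives the claimed identity $\E\betao = \beta + \gamma_0 \, \E_\bX[\bX^\dagger \pi(\bX)]$.

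The main obstacle is the treatment of the pseudoinverse $\bX^\dagger$, which is a random and nonlinear function of $\bX$ and therefore cannot be extracted from the expectation over $\bX$; this is precisely why the last term must remain the joint expectation $\E_\bX[\bX^\dagger \pi(\bX)]$ rather than simplifying further. Conditioning on $\bX$ before averaging over $\bT$ and $\epsilon$ sidesteps this difficulty for terms two and three, but the first term rests on the separate claim $\bX^\dagger \bX = I_{p+1}$, which holds only when $\bX$ has full column rank, i.e.\ when $N \geq p+1$ and the covariate distribution is non-degenerate. In the underdetermined regime $p+1 > N$ emphasized elsewhere in the paper, $\bX^\dagger \bX$ is instead the orthogonal projection onto the row space of $\bX$, and the first term would read $\E_\bX[\bX^\dagger \bX]\,\beta \neq \beta$; I would therefore state the full-rank hypothesis explicitly as the condition under which the identity as written is valid.
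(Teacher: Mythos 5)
Your proof is correct and follows essentially the same route as the paper's (very terse) proof: substitute the model $\bY = \bX\beta + \gamma_0\bT + \epsilon$ into $\betao = \bX^\dagger\bY$ and evaluate $\E[\bX^\dagger\bT]$ by the tower property, with the noise term vanishing by $\E\{\epsilon \given \bX\}=0$. Your explicit remark that the first term equals $\beta$ only under the full-column-rank condition $\bX^\dagger\bX = I_{p+1}$ is a genuine refinement the paper leaves unstated, and it is a real caveat given the paper's footnote allowing the Moore--Penrose pseudoinverse in the underdetermined regime.
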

%
%

\begin{proof}
	Note 
	\begin{equation}
	E [ \bX^\dagger \bT ] 
	= \E_{\bX} \E_{\bT \mid \bX} [ \bX^\dagger \bT ] 
	= \E_{\bX} [ \bX^\dagger \pi(\bX) ]
	\end{equation}
	and
	\begin{equation}
	\betao 
	= \bX^\dagger \bY 
	= \bX^\dagger (\bX \beta + \gamma_0 \bT + \epsilon)
	\end{equation}
\end{proof}

In order to better understand Theorem \ref{thm:regression}, we can state more in the context that $X$ is multivariate Gaussian with non-singular covariance matrix, and $\pi(X_i) = \alpha_0 + X_i\alpha$.
\begin{lemma}\label{lemma:normalRegression}
	Let $X\sim N(0,\Sigma)$ for non-singular $\Sigma$.  Then if $\alpha_i=0$ and $\beta_i=0$, the resulting $\E[\beta'_i]=0$.
\end{lemma}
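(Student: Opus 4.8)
The plan is to read the conclusion directly off Theorem \ref{thm:regression} by exploiting the fact that a linear propensity score forces the correction term $\E_{\bX}[\bX^\dagger \pi(\bX)]$ to collapse onto the propensity coefficients themselves. First I would record the structural observation that, because $\pi(X_i) = \alpha_0 + X_i\alpha$ is affine in the covariates, the entire vector of propensities is a linear image of the design matrix: writing $\tilde{\alpha} := (\alpha_0, \alpha_1, \dots, \alpha_p)^T$ and recalling $\bX = (\ones, \bX_1, \dots, \bX_p)$, the $i$th entry of $\bX\tilde{\alpha}$ is $\alpha_0 + X_i\alpha = \pi(X_i)$, so the exact identity $\pi(\bX) = \bX\tilde{\alpha}$ holds. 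Consequently $\bX^\dagger \pi(\bX) = \bX^\dagger \bX \tilde{\alpha}$, and the whole problem reduces to understanding the matrix $\bX^\dagger \bX$.

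The key step is then to argue that $\bX^\dagger \bX = I_{p+1}$ almost surely. Since $X \sim N(0,\Sigma)$ with $\Sigma$ non-singular, the covariates are non-degenerate, so whenever the sample is large enough that $\bX$ has full column rank the Moore--Penrose pseudoinverse is a genuine left inverse, $\bX^\dagger \bX = (\bX^T\bX)^{-1}\bX^T\bX = I_{p+1}$. I would verify full column rank holds with probability one: a non-singular Gaussian (indeed any absolutely continuous) design places zero mass on any proper linear subspace, so the $p+1$ columns are linearly independent almost surely. Hence $\bX^\dagger\pi(\bX)$ equals the non-random vector $\tilde{\alpha}$ almost surely, and therefore $\E_{\bX}[\bX^\dagger\pi(\bX)] = \tilde{\alpha}$.

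With that in hand the conclusion is immediate: substituting into Theorem \ref{thm:regression} gives $\E\betao = \beta + \gamma_0\tilde{\alpha}$, which componentwise reads $\E[\betao_i] = \beta_i + \gamma_0\alpha_i$. Setting $\alpha_i = 0$ and $\beta_i = 0$ yields $\E[\betao_i] = 0$, as claimed. The same display also makes transparent the paper's motivating intuition, namely that index $i$ acquires nonzero expected weight in $\BODM$ precisely when it influences the outcome ($\beta_i \neq 0$) or the treatment assignment ($\alpha_i \neq 0$).

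The main obstacle is the rank hypothesis, which is exactly where the ``Big Data'' regime $p \gg 1$ emphasized in the introduction creates tension. When $p \gg N$ the design cannot have full column rank, $\bX^\dagger\bX$ becomes the random orthogonal projection onto the row space of $\bX$ rather than the identity, and $\E[\betao_i] = \beta_i + \gamma_0(\E[\bX^\dagger\bX]\tilde{\alpha})_i$ no longer simplifies for free. To push the lemma through in that regime I would need the off-diagonal expected entries $\E[(\bX^\dagger\bX)_{ij}]$ with $j \neq i$ to contribute nothing once $\tilde{\alpha}_i = 0$; here the Gaussian assumption does real work, since after whitening by $\Sigma^{-1/2}$ the row space of a standard Gaussian matrix is rotationally invariant, forcing $\E[\bX^\dagger\bX]$ to be a scalar multiple of the identity. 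Handling the non-random intercept column, which breaks exact isotropy, is the delicate point I expect to spend the most care on; in the complementary full-rank regime, by contrast, the argument above needs none of this and the Gaussianity is merely a convenient concrete model.
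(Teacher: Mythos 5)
Your argument is internally consistent, but it hinges on reading $\pi(X_i)=\alpha_0+X_i\alpha$ as an \emph{exactly affine} propensity, and that is not the model the paper actually analyzes. Taken literally, an affine function of an unbounded Gaussian covariate cannot be a probability (it exits $[0,1]$ with positive probability whenever $\alpha\neq 0$), and a warning sign is that your proof never uses the hypothesis $X\sim N(0,\Sigma)$ at all --- it would hold verbatim for any full-column-rank design. The paper's proof instead works with the latent-threshold model in which $T_i=1$ exactly when $\alpha_0+X_i\alpha+w>0$ for $w\sim\mathrm{Logistic}(0,1)$, i.e.\ $\pi(X)=\mathrm{logistic}(\alpha_0+X\alpha)$, consistent with how the propensity is specified in the PSM discussion. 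Under that model $\pi(\bX)$ is a nonlinear function of the covariates, it does not lie in the column space of $\bX$, and your key identity $\bX^\dagger\pi(\bX)=\bX^\dagger\bX\tilde{\alpha}=\tilde{\alpha}$ simply fails; the correction term in Theorem \ref{thm:regression} is a genuinely nontrivial object.

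The missing idea is precisely where the Gaussianity does the work. The paper evaluates the limit of $\E[(\bX^T\bX)^{-1}\bX^T\bT]$ by conditioning on $T=1$, which truncates the Gaussian to the half-space $\{X\alpha>-\alpha_0-w\}$, and then invokes Tallis's moment formulas $\E[X^T\given Xc>p]=C\,\Sigma c$ and $\E[X^TX\given Xc>p]=\Sigma+C'\,\Sigma cc^T\Sigma$. Inverting the truncated second moment via the binomial inverse theorem and multiplying yields a limit of the form $C_1\alpha-C_3(I+c\,\alpha\alpha^T\Sigma)^{-1}\alpha$, which a block-triangular computation shows is supported on $\mathrm{supp}(\alpha)$; combined with $\beta_i=0$ this gives $\E[\betao_i]=0$. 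So the half-space--truncated Gaussian computation, not the rank of $\bX$, is the crux, and your closing discussion of the $p\gg N$ regime addresses a different difficulty from the one the lemma actually poses. (In fairness, the sentence preceding the lemma does write $\pi(X_i)=\alpha_0+X_i\alpha$, so the misreading is invited by the text; but the intended content is the logistic-threshold case, for which your route does not go through.)
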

The proof of Lemma \ref{lemma:normalRegression} centers around showing $\E[X^\dagger\pi(X)]$ is a multiple of $\alpha$ in the normal distribution setting, as the zero elements of $\beta$ are clear.  The proof is in Appendix \ref{appendix:proofLemma}.

\subsection{Reduction of variance by using outcomes}

Second, we show that variance in the estimate of the ATT decreases with OBM
relative to MDM as the number of covariates $p$ grows.
To see this, define a \emph{perfect matching} to be any matching
$(i,m(i))\in \mathcal{E}$ such that 
$\E \{Y_i(0) \given X_i \} = \E \{Y_{m(i)}(0) \given X_{m(i)} \}$.
In particular, let $\E \{Y_i(0) \given X_i=x \} 
= f(x_{j_1}, \dotsc, x_{j_{d}})$. Then exact matching
on covariates $\{ x_{j_1}, \dotsc, x_{j_{d}} \}$ yields a perfect matching.


Suppose we have
\begin{align}
\begin{split}
X_i \given T_i=0 &\sim U([0,1]^p) \\
X_i \given T_i=1 &\sim U([0,1]^p+\eta) \\
Y_i &= X_i \beta+T_i \gamma_0 + \epsilon_i
\end{split}
\tag{$\ast\ast$}
\label{eq:cond3}
\end{align}
where $\epsilon_i$ is iid random noise with mean 0.
\footnote{The notation $[0,1]^p+\eta$
	indicates the shifted hypercube 
	$[\eta_1,\eta_1+1] \times \dotsm \times [\eta_p, \eta_p+1]$.}

\begin{theorem}\label{thm:noiseFolding}
	Assume \eqref{eq:cond3}, $\mathrm{supp}(\beta) = K$,
	and $\lvert K \rvert = d < p$.  Let $m$ be a perfect matching.  
	Then the expected MDM distance is
	\begin{eqnarray*}
		\E\left[d_R(X_i, X_{m(i)})^2\right] 
		= \frac{1}{6} \sum_{j\not\in K} (\Sigma^{\dagger})_{j,j} + \eta_{K^c}^* (\Sigma^{\dagger}) \eta_{K^c},
	\end{eqnarray*}
	where $\eta_{K^c} = \begin{cases} \eta_i, & i\not\in K \\ 0, & i \in K\end{cases}.$
\end{theorem}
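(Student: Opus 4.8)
The plan is to expand the squared Mahalanobis distance as a quadratic form in the difference vector $D := X_i - X_{m(i)}$ and then take the expectation entrywise, reducing everything to the first two moments of each coordinate difference. First I would write, treating the whitening matrix in the population limit so that $R R^T = \Sigma^\dagger$,
\begin{equation*}
d_R(X_i, X_{m(i)})^2 = D\, \Sigma^\dagger D^T = \sum_{j,k} D_j D_k (\Sigma^\dagger)_{j,k}, \qquad \E\left[d_R^2\right] = \sum_{j,k} \E[D_j D_k]\,(\Sigma^\dagger)_{j,k}.
\end{equation*}
Everything then hinges on computing the mixed moments $\E[D_j D_k]$.

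The key structural step is to use the perfect-matching hypothesis. Since under \eqref{eq:cond3} we have $\E\{Y_i(0) \given X_i = x\} = x\beta$, which depends on $x$ only through the coordinates in $K = \mathrm{supp}(\beta)$, a perfect matching is realized by exact matching on precisely those coordinates, as noted in the text. Hence $D_j = X_{i,j} - X_{m(i),j} = 0$ for every $j \in K$, so the double sum collapses to indices $j,k \in K^c$.

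For the surviving noise coordinates I would argue that, because the matching depends only on the $K$-coordinates and the coordinates of the uniform hypercube are mutually independent, the $K^c$-coordinates of $i$ and of $m(i)$ remain independent uniforms, $X_{i,j} \sim U([\eta_j, \eta_j + 1])$ and $X_{m(i),j} \sim U([0,1])$. A direct computation then gives $\E[D_j] = \eta_j$ and $\mathrm{Var}(D_j) = \tfrac{1}{12} + \tfrac{1}{12} = \tfrac{1}{6}$, so $\E[D_j^2] = \tfrac{1}{6} + \eta_j^2$, while independence across coordinates yields $\E[D_j D_k] = \eta_j \eta_k$ for $j \neq k$. Substituting these moments separates the diagonal contribution $\tfrac{1}{6}\sum_{j \notin K}(\Sigma^\dagger)_{j,j}$ from the product terms $\sum_{j,k \in K^c}\eta_j \eta_k (\Sigma^\dagger)_{j,k}$, and the latter is exactly $\eta_{K^c}^{*}(\Sigma^\dagger)\eta_{K^c}$ by the definition of $\eta_{K^c}$, giving the claim.

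The step I expect to be the main obstacle is the independence argument in the third paragraph: one must justify carefully that conditioning on a perfect match --- a constraint living entirely on the $K$-coordinates --- induces neither correlation nor a distributional shift in the $K^c$-coordinates of the matched pair, so that each $D_j$ with $j \in K^c$ is genuinely the difference of two independent uniforms. This is where the product structure of the uniform distribution on the hypercube, together with the fact that the matching rule ignores the noise covariates, does all the work, and it is worth isolating as an explicit conditional-independence statement before computing moments.
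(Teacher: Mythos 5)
Your proposal is correct and follows essentially the same route as the paper's own proof: expand the squared Mahalanobis distance as a quadratic form in the coordinate differences, use the perfect-matching hypothesis to kill all terms with an index in $K$, and substitute the moments $\E[D_j]=\eta_j$, $\E[D_j^2]=\tfrac{1}{6}+\eta_j^2$, $\E[D_jD_k]=\eta_j\eta_k$ of differences of independent uniforms. The only difference is that you explicitly flag the conditional-independence of the $K^c$-coordinates given the matching as a point needing justification, whereas the paper simply asserts the moment values; your added care there is reasonable but does not change the argument.
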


One way to interpret this theorem is that, 
when using Mahalanobis distance matching, 
every variable that doesn't affect outcomes 
adds variance into the distance between 
two perfectly matched points 
(i.e. people with the exact same features 
for any feature that affects outcome risk).  
On top of that, if there is a treatment propensity 
on variables that don't affect risk, 
this systematic bias only gets worse.  
That is because $\Sigma^\dagger$ is positive semi-definite, 
which means $\eta_{K^c}^* (\Sigma^{\dagger}) \eta_{K^c} \ge 0$.

\begin{theorem}\label{thm:noiseFoldingODM}
	Under the assumptions of Theorem \ref{thm:noiseFolding}, 
	the expected ODM distance is
	\begin{eqnarray*}
		\E\left[d_{\BODM}(X_i, X_{m(i)})^2\right] 
		&=& \sum_{j\in K^c\cap supp(\eta)} (\betao)_{j} \left(\frac{1}{6} + \eta_j^2\right).
	\end{eqnarray*}
\end{theorem}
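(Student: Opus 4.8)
The plan is to use the diagonal form of $\BODM$ to split the squared distance coordinatewise, to evaluate each coordinate's expected contribution under \eqref{eq:cond3} exactly as in the proof of Theorem \ref{thm:noiseFolding}, and then to show that the balanced noise coordinates contribute nothing because their regression weight is zero. Since $\BODM$ is diagonal, $\BODM\BODM^T = \mathrm{diag}(\abs{\betao_0}, \dotsc, \abs{\betao_p})$ and hence $d_{\BODM}(X_i,X_{m(i)})^2 = \sum_j \abs{\betao_j}\,\Delta_j^2$, where $\Delta_j := (X_i)_j - (X_{m(i)})_j$; the intercept coordinate drops out because $X_i$ and $X_{m(i)}$ agree on the constant entry. (I read the $(\betao)_j$ of the statement as $\abs{\betao_j}$, the square of the corresponding diagonal entry of $\BODM$.)

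First I would split the sum over $j \in K$ and $j \in K^c$. Taking the perfect matching to be exact matching on the outcome-relevant covariates $K$ (as noted just before the statement, and as is implicit in Theorem \ref{thm:noiseFolding}), we have $\Delta_j = 0$ for $j \in K$, so those terms vanish. For $j \in K^c$ the treated coordinate is $U([\eta_j,\eta_j+1])$ and the control coordinate is an independent $U([0,1])$, so $\E[\Delta_j^2] = \mathrm{Var}(\Delta_j) + (\E\Delta_j)^2 = (\tfrac{1}{12}+\tfrac{1}{12}) + \eta_j^2 = \tfrac16 + \eta_j^2$. This is the same variance-folding moment used in Theorem \ref{thm:noiseFolding}, now paired with the weight $\abs{\betao_j}$ rather than with $\Sigma^\dagger$.

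The crux is to show $\betao_j = 0$ for every $j \in K^c$ with $\eta_j = 0$, so that the surviving sum runs only over $K^c \cap \operatorname{supp}(\eta)$. The structural fact that makes this work is that under \eqref{eq:cond3} the covariance of the covariates equals $\tfrac{1}{12}I_p + \mathrm{Var}(T)\,\eta\eta^T$: conditionally on $T$ the coordinates are independent, and a coordinate with $\eta_j=0$ has a conditional law not depending on $T$, so $\mathrm{Cov}(X_j,X_k)=0$ for all $k\neq j$ and $\mathrm{Cov}(X_j,T)=0$. Consequently the $j$-th population normal equation decouples to $\mathrm{Var}(X_j)\,\betao_j = \mathrm{Cov}(X_j,Y)$, and $\mathrm{Cov}(X_j,Y) = \beta_j\,\mathrm{Var}(X_j) + \gamma_0\,\mathrm{Cov}(X_j,T) = 0$ because $\beta_j=0$ and $\mathrm{Cov}(X_j,T)=0$; hence $\betao_j=0$. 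This is the uniform-shift analogue of Lemma \ref{lemma:normalRegression}, and it is the main obstacle: whereas in the MDM computation every noise coordinate enters through $\Sigma^\dagger$, here one must argue that the regression genuinely annihilates balanced noise coordinates, which hinges on the product (hence off-diagonal-free whenever $\eta_j=0$) structure of the uniform design rather than on Gaussianity.

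Assembling the pieces gives $\E[d_{\BODM}^2] = \sum_{j\in K^c}\abs{\betao_j}(\tfrac16+\eta_j^2) = \sum_{j\in K^c\cap\operatorname{supp}(\eta)}\abs{\betao_j}(\tfrac16+\eta_j^2)$, as claimed. Throughout I would treat $\betao$ as fixed at its population value (large $N$), consistent with the way $\Sigma^\dagger$ is treated in Theorem \ref{thm:noiseFolding}; the decoupling argument then yields $\betao_j=0$ exactly, whereas if $\betao$ were kept random one would only obtain $\E\betao_j=0$ from the pseudoinverse identity in the proof of Theorem \ref{thm:regression}, and one would need the stronger almost-sure vanishing to kill those terms---a subtlety worth flagging.
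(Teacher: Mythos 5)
Your computation follows exactly the route the paper intends: it declares the proof ``virtually identical'' to that of Theorem~\ref{thm:noiseFolding}, i.e.\ replace $\Sigma^\dagger$ by the diagonal matrix $\BODM\BODM^T=\mathrm{diag}(\abs{\betao_j})$, kill the $K$-coordinates via the perfect matching, and use $\E[\xi_j^2]=\tfrac16+\eta_j^2$ for the remaining coordinates. That part is correct, including your reading of $(\betao)_j$ as $\abs{\betao_j}$. Where you go beyond the paper is the step you rightly call the crux: the restriction of the sum from $K^c$ to $K^c\cap\mathrm{supp}(\eta)$ requires $\betao_j=0$ for balanced noise coordinates, which the paper never establishes under \eqref{eq:cond3} --- its Lemma~\ref{lemma:normalRegression} covers only the Gaussian design. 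Your decoupling argument (coordinates with $\eta_j=0$ are uncorrelated with every other coordinate and with $T$, so the $j$-th population normal equation forces $\betao_j=\beta_j=0$) supplies exactly the missing justification, and your closing remark is apt: this holds for the population coefficient, whereas for the finite-sample OLS estimate one only gets $\E\betao_j=0$, so the theorem as stated is really a large-$N$ statement. In short, the proposal is correct, matches the paper's approach where the paper has one, and patches the one step the paper leaves unproved.
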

The proof of Theorem \ref{thm:noiseFoldingODM} is virtually identical to the proof of Theorem \ref{thm:noiseFolding} with a different matrix multiplying the pairwise distances.
Note that the systematic 
variance 
of ODM distance is limited to the features that affect treatment propensity but don't affect risk.  The other features do not affect this distance.


\section{Simulation Study}

\subsection{The King-Nielson Simulation}

King and Nielson \cite{kingnielson} use simulation to evaluate model dependence,
in particular, showing that PSM leads to greater model dependence than MDM.  
We replicate their simulation to evaluate our method in this context.
In the simulation, 100 control units and 100 treated units were drawn uniformly
from the squares $[0,5]^2$ and $[1,6]^2$, respectively.  
The outcome was generated as $Y = X_1+X_2 + 2T+\epsilon$, $\epsilon \sim N(0,1)$.  After generating data, the regression estimator
given by parallel regressions on covariates is used to estimate ATT.  
The specification of the outcome model is assumed to be unknown, 
hence a collection of outcome models are fit.  
Specifically, 512 models\footnote{i.e. feature sets} are fit, 
corresponding to linear regression with up to 3rd order products of $X_1, X_2$.
The variance of the 512 estimates of the ATE, averaged over 100 runs of the simulation, is plotted against different levels of pruning for each of the matching methods. 

We repeat the King-Nielson simulation with outcome based matching.
(Figure \ref{fig:kns})

\subsection{Extension of the King-Nielson Simulation}

Furthermore, we generalize the original simulation and evaluate the methods under additional simulation scenarios.
In particular, we draw a control group uniformly on the hypercube $H = [0,5]^d$ and the treatment group uniformly on the shifted hypercube $H+\eta = [\eta_1,\eta_1+5] \times \dotsm \times [\eta_d, \eta_d+5]$.
The outcome depends linearly on the pretreatment covariates and treatment: $Y = \beta_0 + \beta X + T( \gamma_0 + \gamma X) + \epsilon$, $\epsilon \sim N(0,\sigma)$.

Note that we identify the ATT as the expectation of $Y_1-Y_0 \given T=1$ on the common support
of the treated and control units
$[\eta_1,5] \times \dotsm \times [\eta_d, 5]$.
For example, the ATT corresponding to $Y =X_1+X_2+ T (-1+X_1)+ \epsilon$, $p=2$, 
and $\eta=(1,1)$ is $\tau = \E_{X \sim U([1,5] \times [1, 5])} [-1+X_1]= 2$.

With $\eta=(1,1)$, the common support of the treated and control groups comprises approximately $64\%$ of the treated sample.  Hence, given ground truth knowledge of the sample distributions, we expect to prune $36\%$ of the original sample for the best estimate of ATT.

\begin{table}
	\caption{Summary of scenarios based on the King Nielson example.}
	\begin{center}
		\begin{tabular}{ |c|c|c|c|c| } 
			\hline
			scenario & $p$ & $\eta$ & outcome model & ATT \\
			\hline
			1 & $2$ & $\mathbf{1}_2$ & $Y =X_1+X_2+ 2T+ \epsilon$ & 2 \\  
			2 & $10$ & $(1,1,0,\dotsc,0)$ & $Y =X_1+X_2+ 2T+ \epsilon$ & 2 \\  
			3 & $2$ & $\mathbf{1}_2$ & $Y =2 X_1+ 0.2 X_2+ 2T+ \epsilon$ & 2 \\  
			4 & $2$ & $\mathbf{1}_2$ & $Y =X_1+X_2+ T (-1+X_1)+ \epsilon$ & 2  \\ 
			5 & $10$ & $(1,1,0,\dotsc,0)$ & $Y = X_1+X_2+ T (-1+X_1)+ \epsilon$ & 2   \\ 
			6 & $2$ & $\mathbf{1}_2$ & $Y = 2 X_1+ 0.2 X_2+ T (-1+X_1) + \epsilon$ & 2 \\ 
			7 & $5$ & $\mathbf{1}_5$ & $Y =X_1+X_2+ 2T+ \epsilon$ & 2  \\ 
			8 & $10$ & $\mathbf{1}_{10}$ & $Y =X_1+X_2+ 2T+ \epsilon$ & 2   \\ 
			9 & $15$ & $\mathbf{1}_{15}$ & $Y =X_1+X_2+ 2T+ \epsilon$ & 2 \\ 
			\hline
		\end{tabular}
	\end{center}
\end{table}

\begin{figure}
	\begin{tabular}{ccc}
		
		\includegraphics[width=0.3\textwidth]{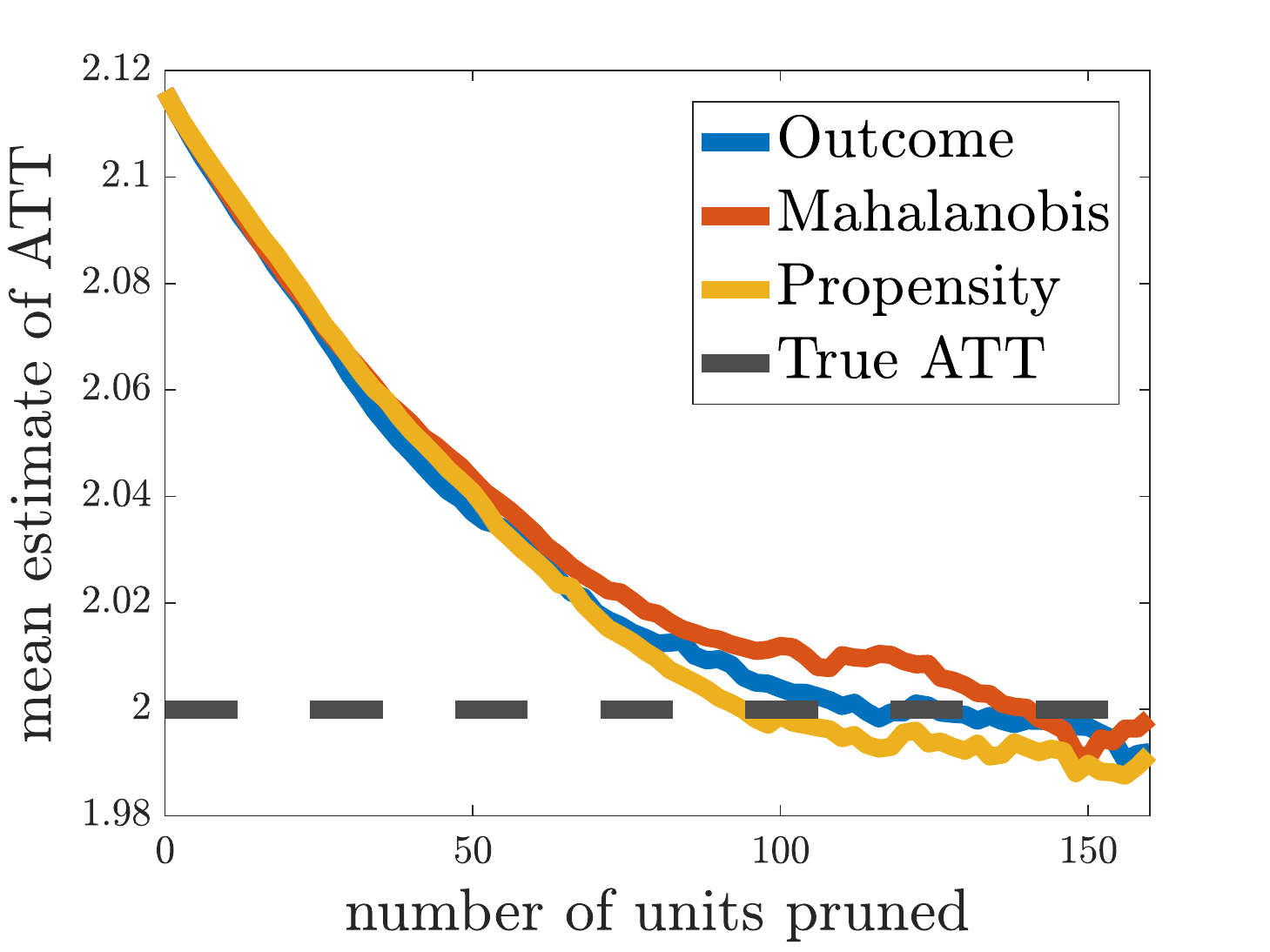}
		& 
		\includegraphics[width=0.3\textwidth]{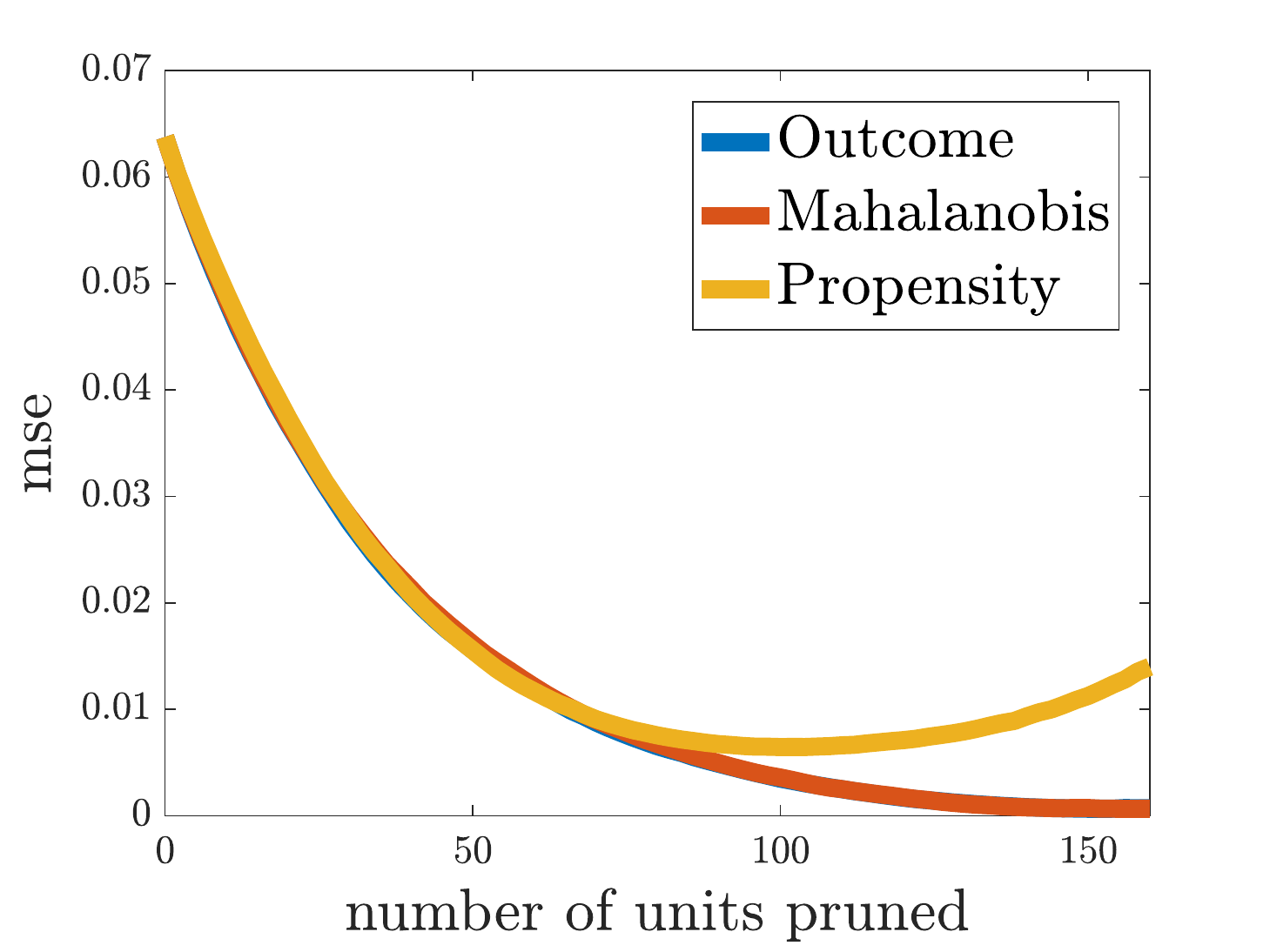} 
		&
		\includegraphics[width=0.3\textwidth]{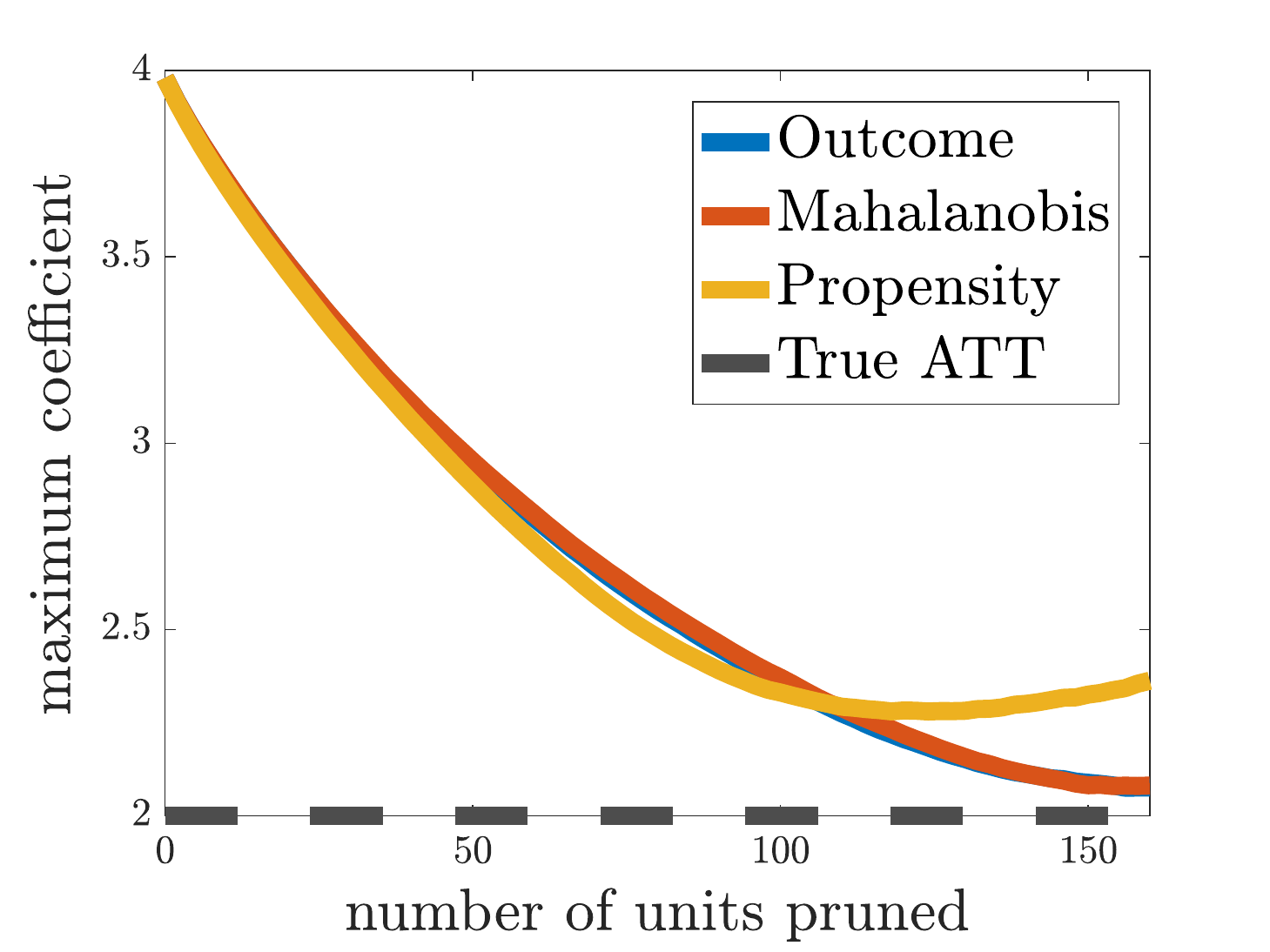}
		\\
		mean regression estimate ATT & MSE & max coefficient
	\end{tabular}
	\caption{King Nielson Simulation (Scenario 1) with OBM included.}
	\label{fig:kns}
\end{figure}

\begin{figure}
	\footnotesize
	\begin{tabular}{ccc}
			\includegraphics[width=0.3\linewidth]{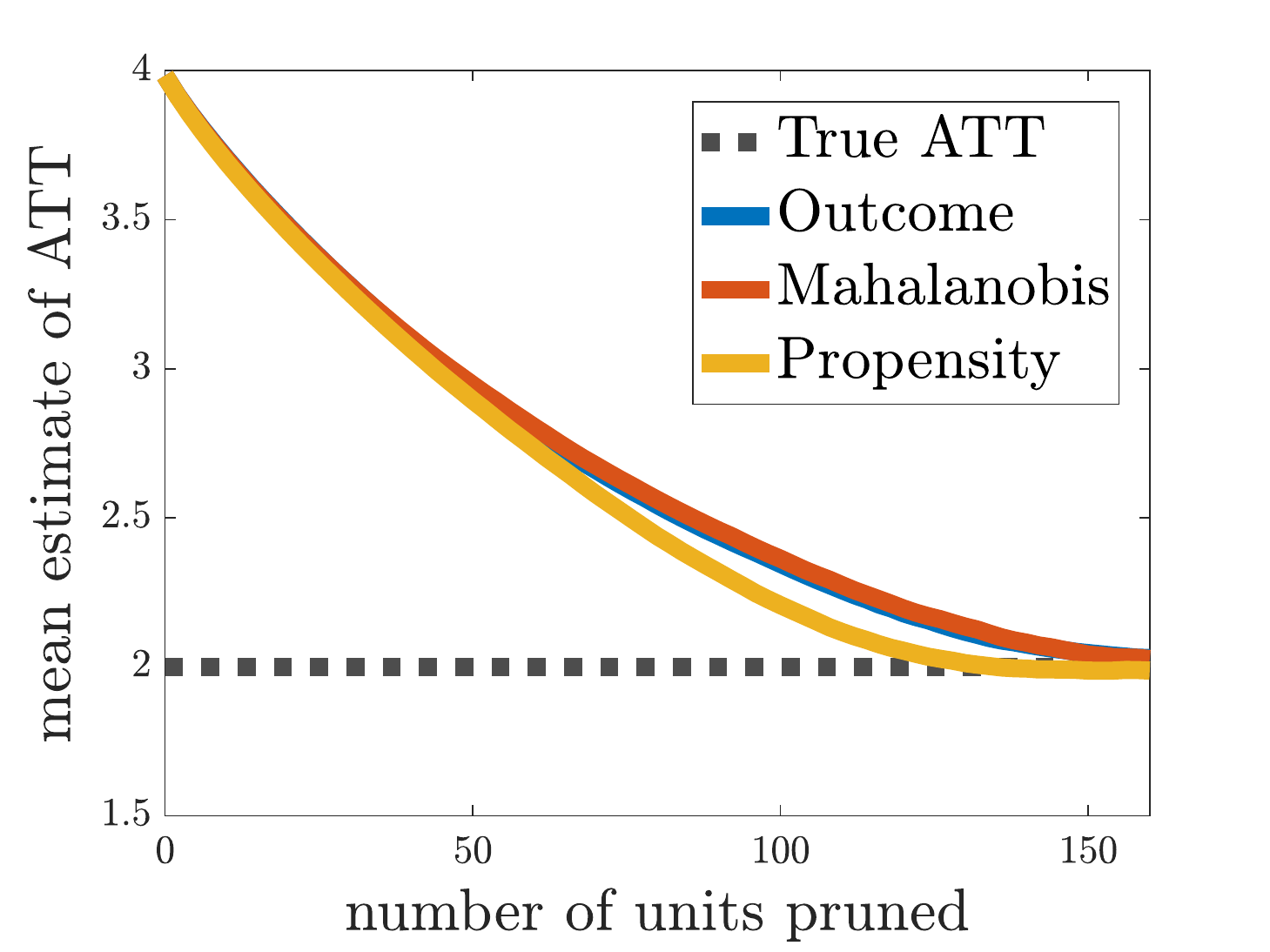} &
			\includegraphics[width=0.3\linewidth]{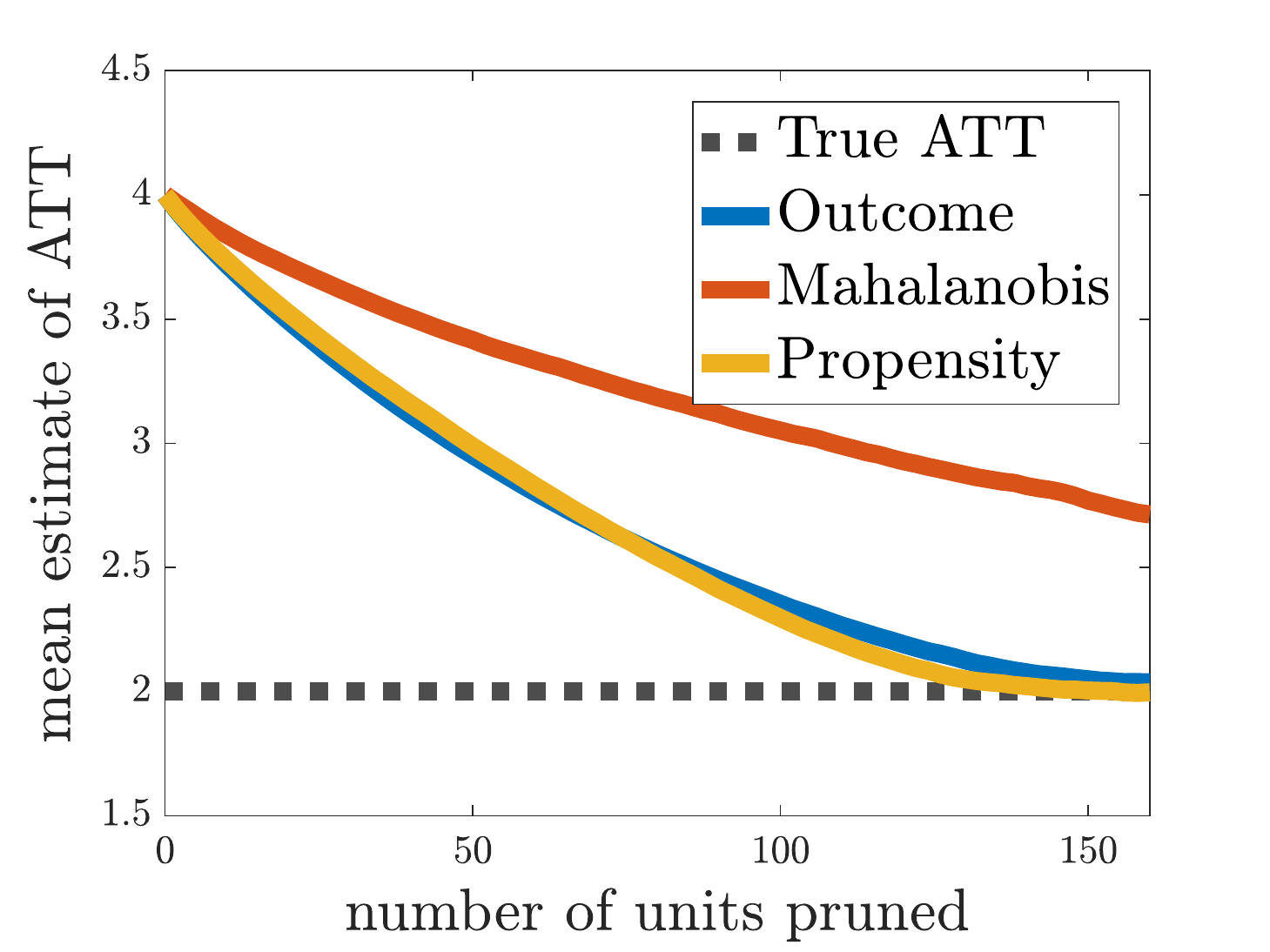} &
			\includegraphics[width=0.3\linewidth]{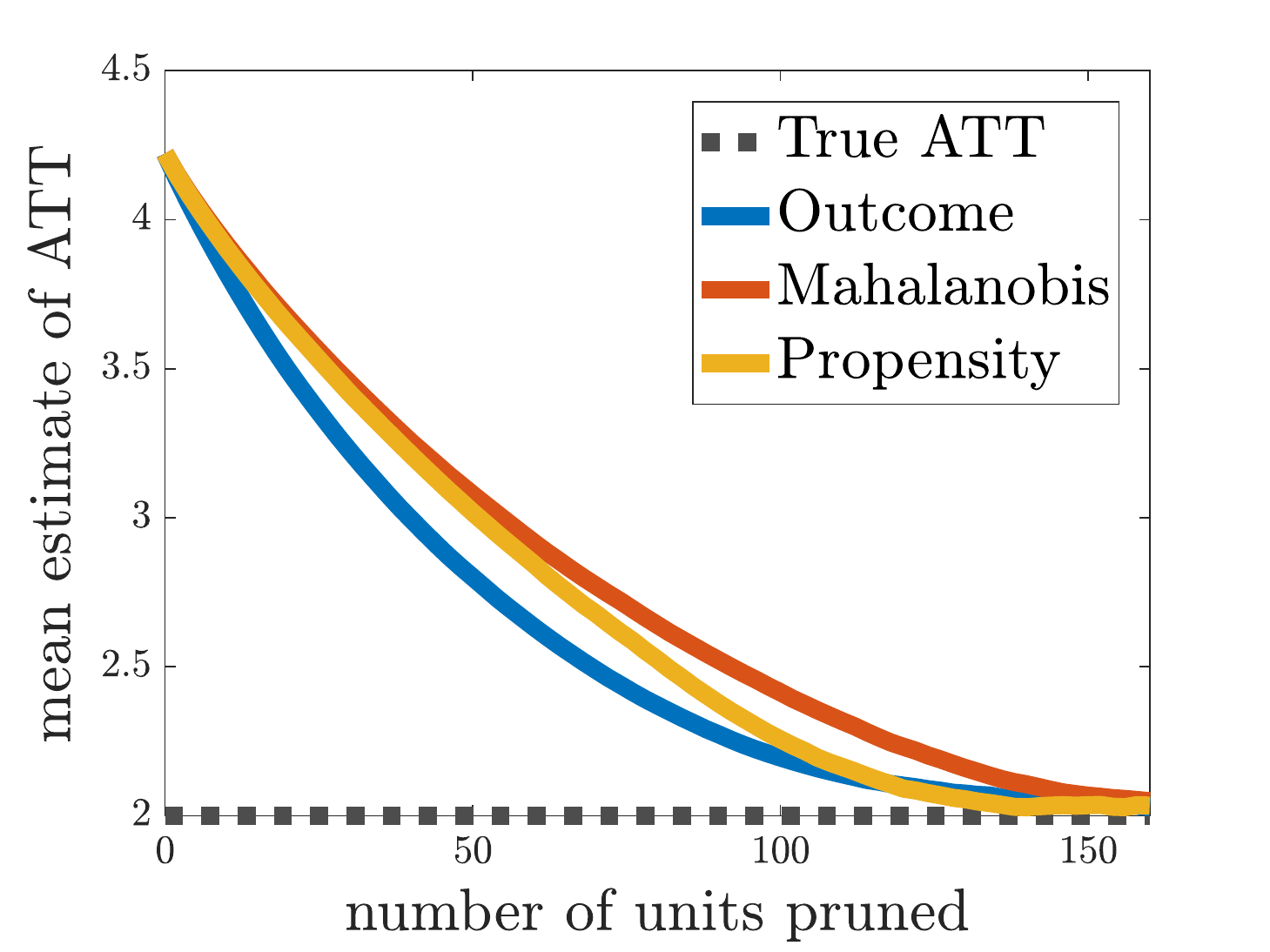} 
		\\
		Scenario 1 & Scenario 2 & Scenario 3 
		\vspace{2em} \\
			\includegraphics[width=0.3\linewidth]{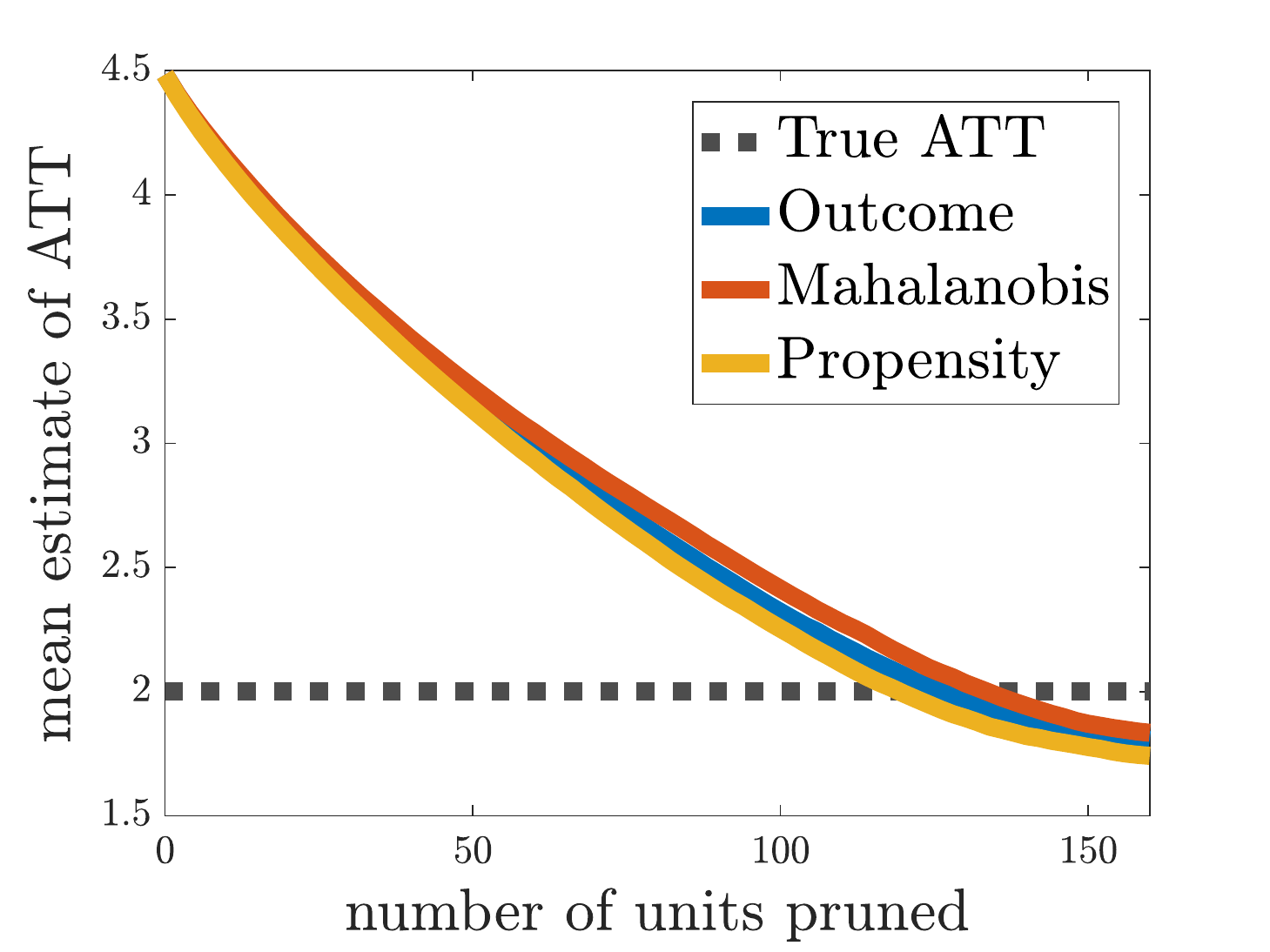} &
			\includegraphics[width=0.3\linewidth]{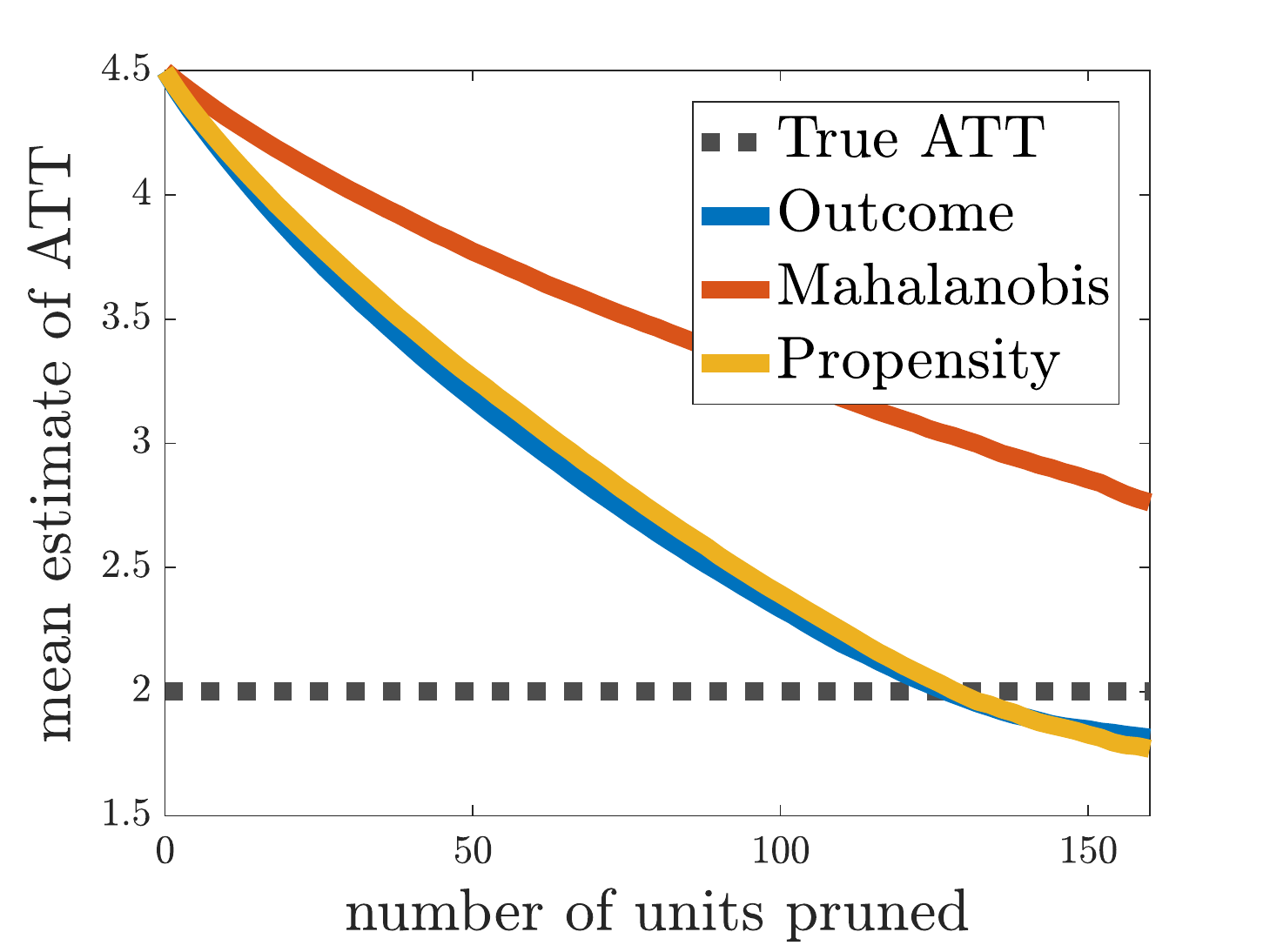} &
			\includegraphics[width=0.3\linewidth]{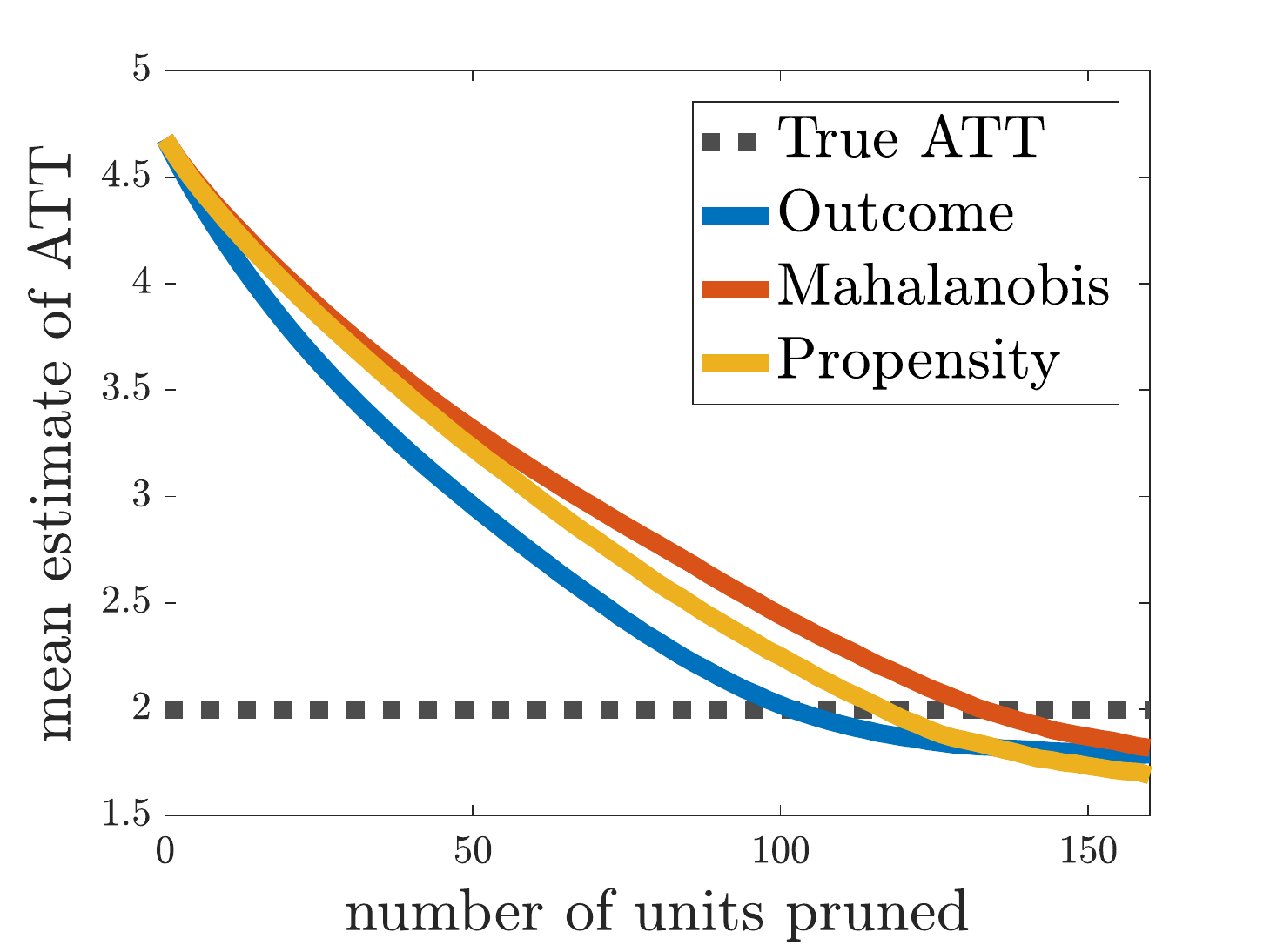} 
		\\
		Scenario 4 & Scenario 5 & Scenario 6
	\end{tabular}
	\caption{Mean matching estimates of ATT over the number 
		of units pruned for scenarios 1-6.}
	\label{fig:sc_123456_mse}
\end{figure}
\normalsize



\begin{figure}
	\footnotesize
	\begin{tabular}{ccc}
			\includegraphics[width=0.3\linewidth]{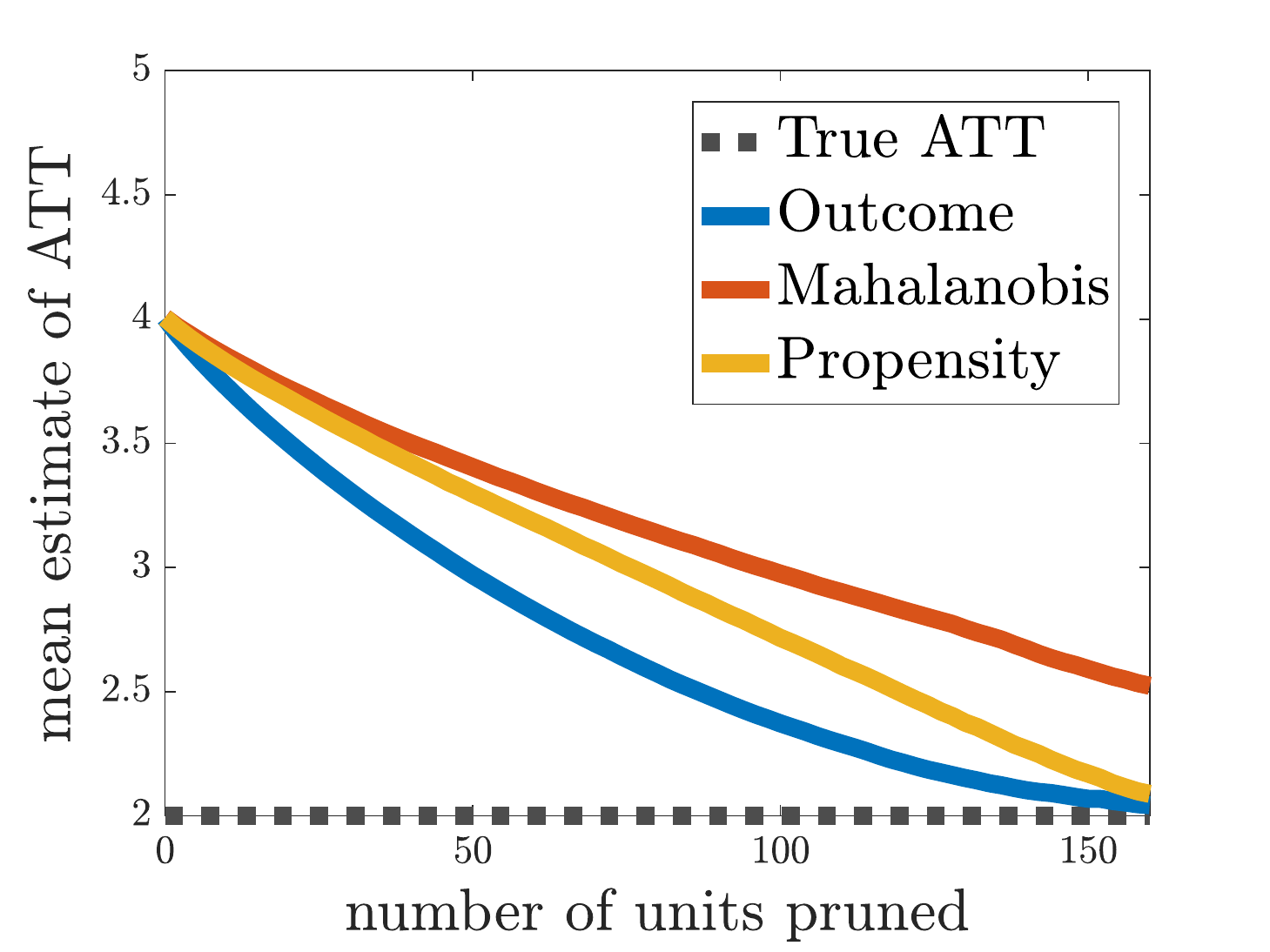} &
			\includegraphics[width=0.3\linewidth]{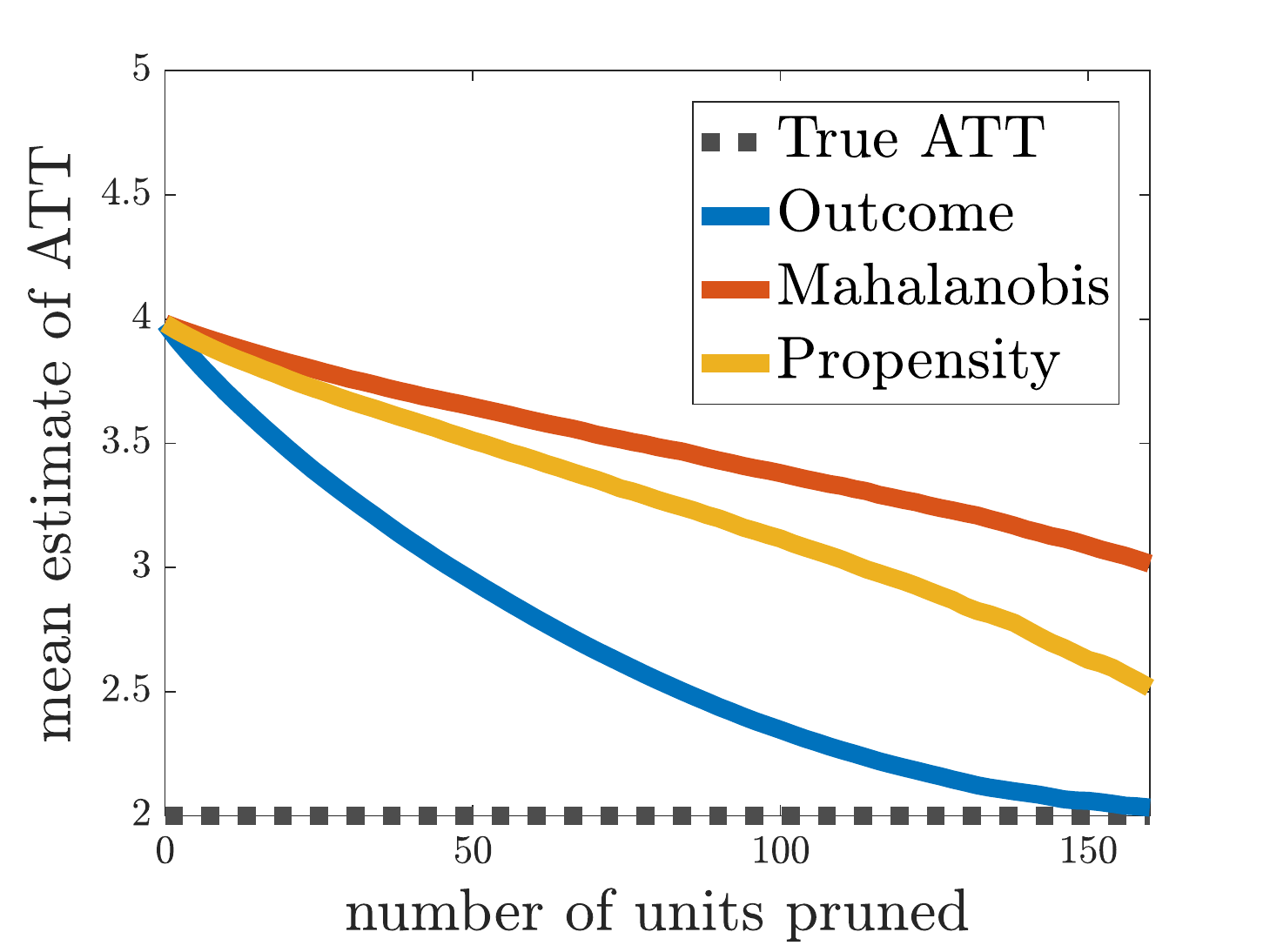} &
			\includegraphics[width=0.3\linewidth]{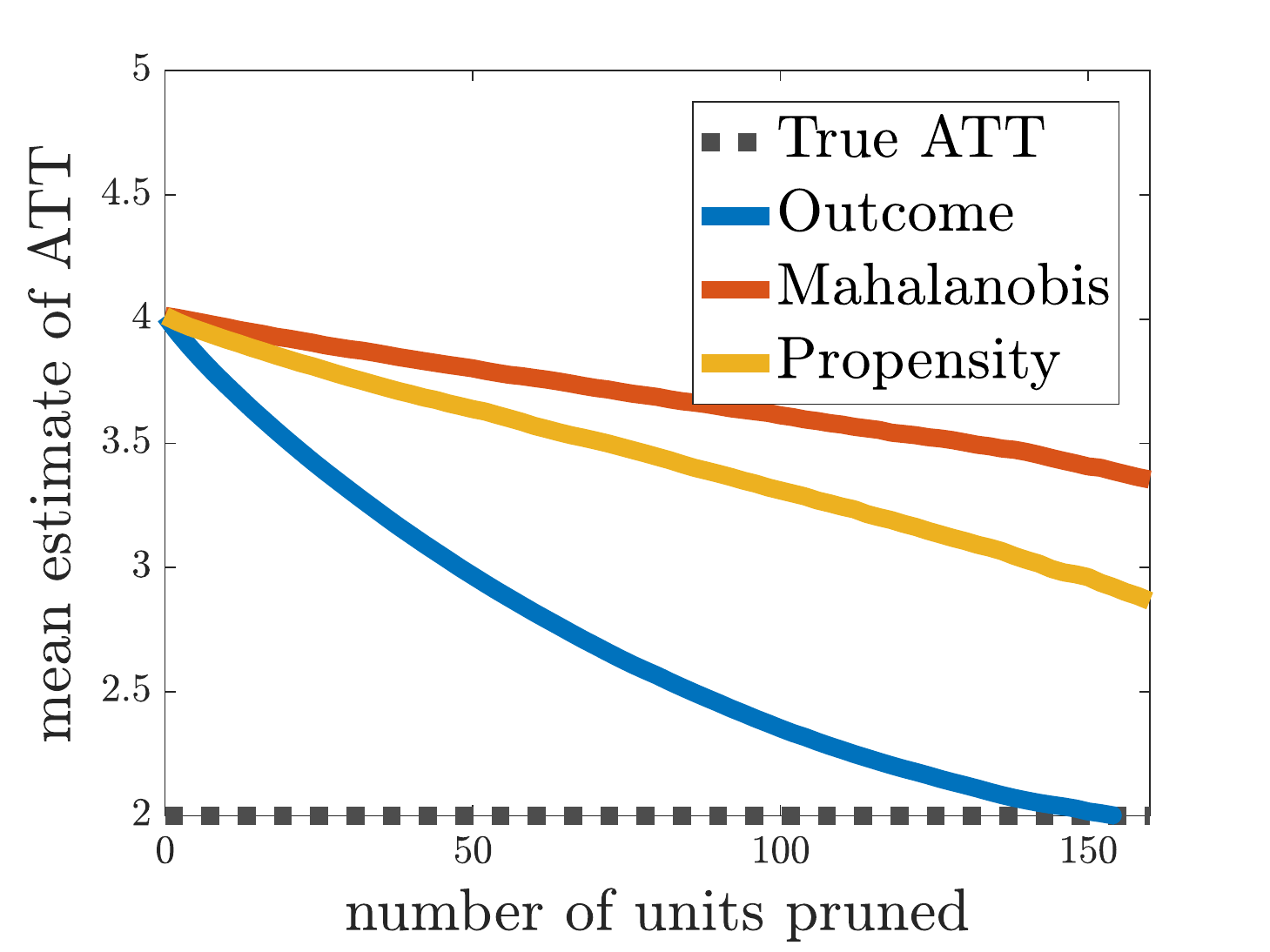} 
		\\
		Scenario 7 & Scenario 8 & Scenario 9
	\end{tabular}
	\caption{Mean matching estimates of ATT over the number 
		of units pruned for scenarios 7-9.}
	\label{fig:sc_789_mse}
\end{figure}
\normalsize

\begin{figure}
	\footnotesize
	\begin{tabular}{ccc}
			\includegraphics[width=0.3\linewidth]{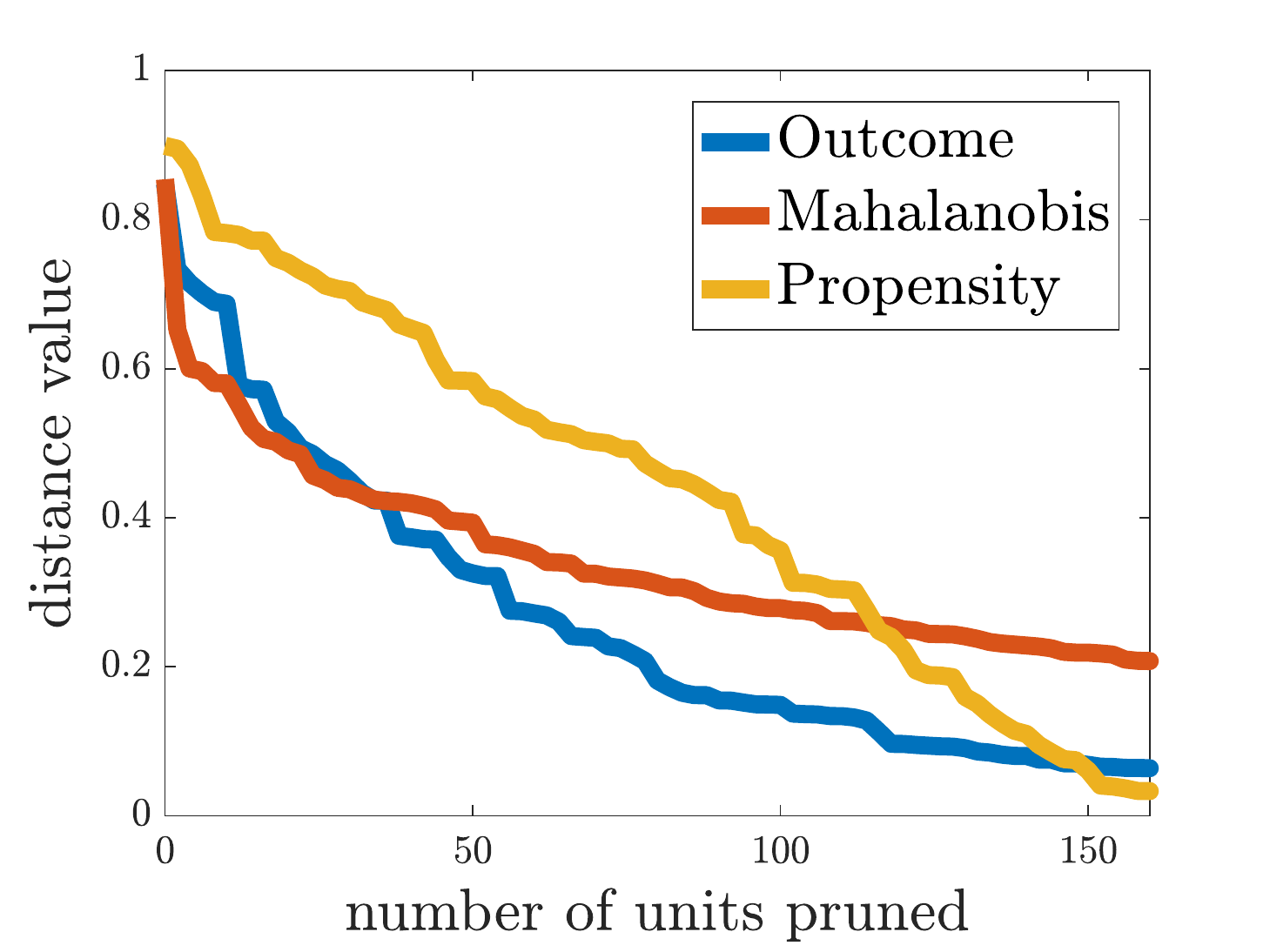} &
			\includegraphics[width=0.3\linewidth]{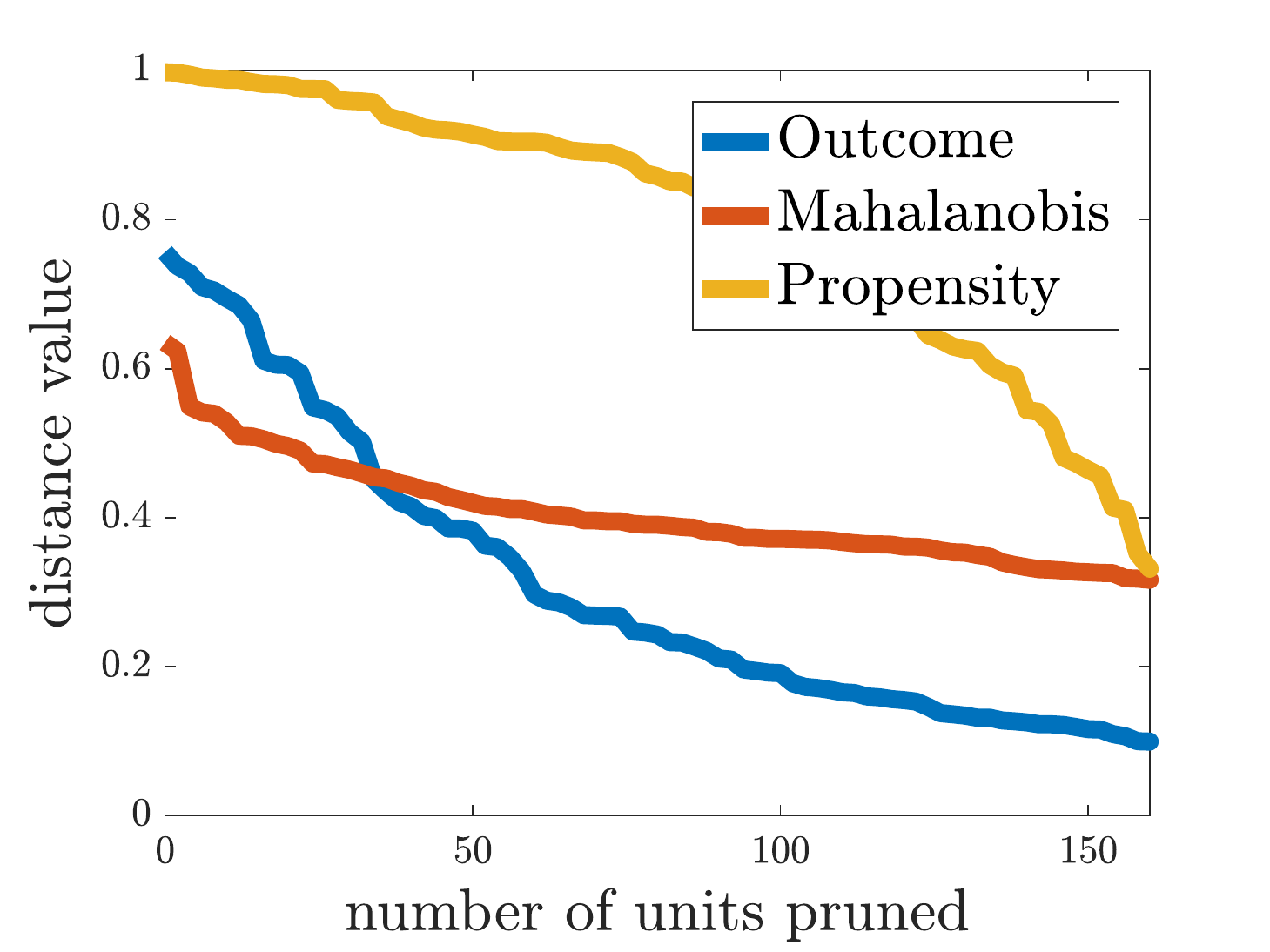} &
			\includegraphics[width=0.3\linewidth]{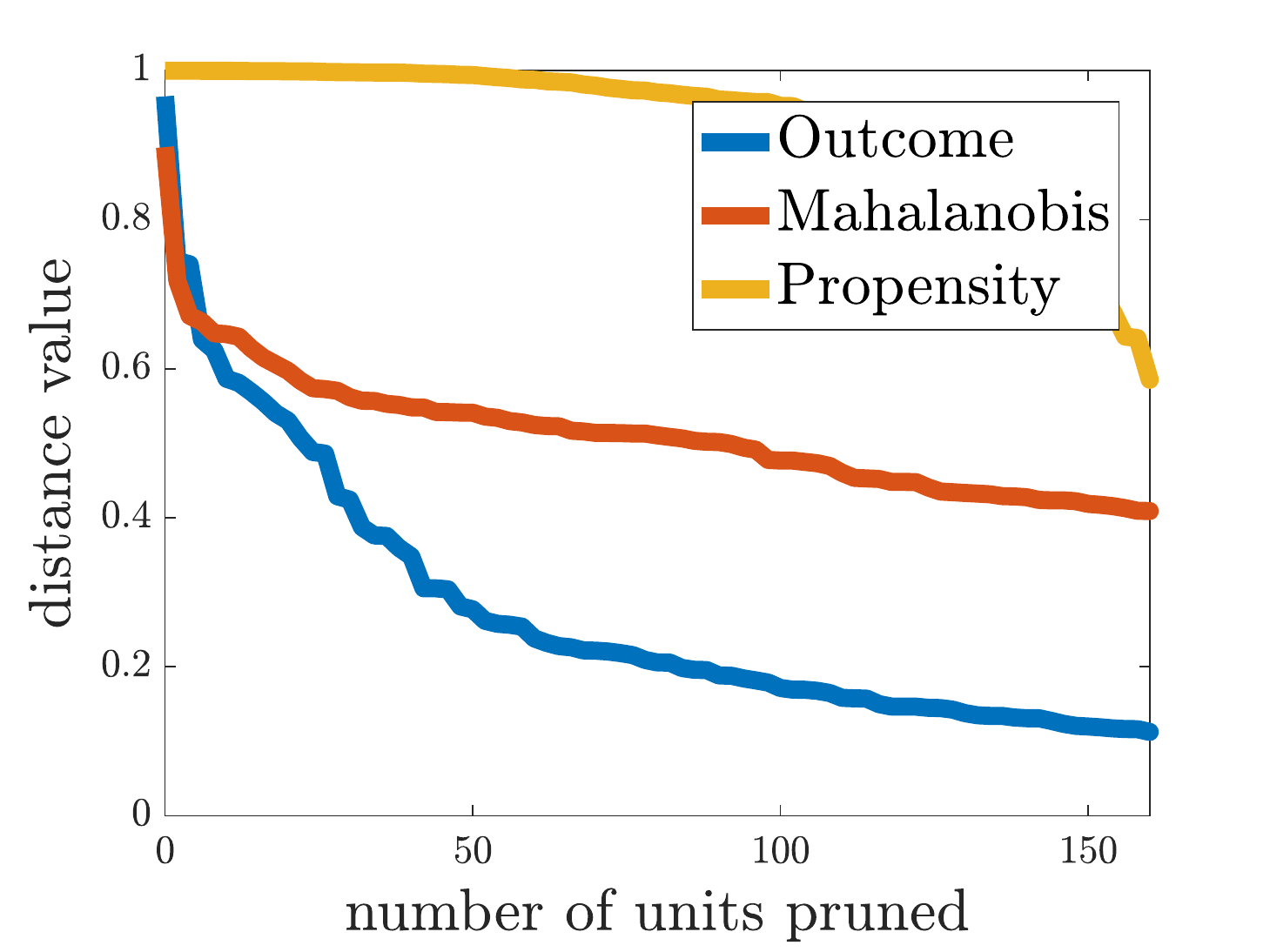}
		\\
		Scenario 7 & Scenario 8 & Scenario 9
	\end{tabular}
	\caption{Distance values over the number 
		of units pruned for scenarios 7-9.}
	\label{fig:sc_789_dc}
\end{figure}
\normalsize

%
%


\section{Discussion and Conclusion}

This paper gives a simple and concrete algorithm for generating matches that discounts or removes covariates that are irrelevant to the outcomes and propensity of a person, and performs a weighted distance matching on the remaining covariates.  The matching is balanced, and empirically has higher accuracy and lower variance than propensity matching and Mahalanobis distance matching for a wide range of calipers, especially in settings where there exist covariates irrelevant to the outcome and propensity functions.  This builds upon the work of King and Nielson \cite{kingnielson} in further demonstrating that propensity matching is highly sensitive to the caliper, and to non-constant treatment effects.

This method lends itself to a number of extensions toward personalized treatment predictions, mostly because the method groups together people that would have an outcome at the same rate pre-treatment.  The authors are examining personalized treatment recommendations based off of the resulting OBM matching, especially when the treatment effect functional form is unknown, using either function driven diffusion metrics for counterfactual functions \cite{cloninger2016function} or deep neural network predictions for survival data \cite{katzman2016deep}.  The authors are also exploring lower bounds for the algorithm and more complex models of patient covariates.  The extension of the algorithm to binary outcomes and local averages for prediction is also a subject of future work.





%
%
%
%
%
%
%

\newpage

\bibliographystyle{wileyj}
\bibliography{main}

\begin{thebibliography}{1}
\providecommand{\url}[1]{\texttt{#1}}
\providecommand{\urlprefix}{URL }
\expandafter\ifx\csname urlstyle\endcsname\relax
  \providecommand{\doi}[1]{doi:\discretionary{}{}{}#1}\else
  \providecommand{\doi}{doi:\discretionary{}{}{}\begingroup
  \urlstyle{rm}\Url}\fi

\bibitem{gu1993comparison}
Gu XS, Rosenbaum PR. Comparison of multivariate matching methods: Structures,
  distances, and algorithms. \emph{Journal of Computational and Graphical
  Statistics}  1993; \textbf{2}(4):405--420.

\bibitem{kingnielson}
King G, Nielsen R. Why propensity scores should not be used for matching 2016.
  Working paper.

\bibitem{brookhart2006variable}
Brookhart MA, Schneeweiss S, Rothman KJ, Glynn RJ, Avorn J, St{\"u}rmer T.
  Variable selection for propensity score models. \emph{American journal of
  epidemiology}  2006; \textbf{163}(12):1149--1156.

\bibitem{rr83}
Rosenbaum PR, Rubin DB. The central role of the propensity score in
  observational studies for causal effects. \emph{Biometrika}  1983;
  \textbf{70}(1):41--55, \doi{10.1093/biomet/70.1.41}.
  \urlprefix\url{http://biomet.oxfordjournals.org/content/70/1/41.abstract}.

\bibitem{imbensrubinbook}
Imbens GW, Rubin DB. \emph{Causal Inference for Statistics, Social, and
  Biomedical Sciences: An Introduction}. Cambridge University Press, 2015.

\bibitem{cloninger2016function}
Cloninger A. Function driven diffusion for personalized counterfactual
  inference. \emph{arXiv preprint arXiv:1610.10025}  2016; .

\bibitem{katzman2016deep}
Katzman J, Shaham U, Bates J, Cloninger A, Jiang T, Kluger Y. Deep survival: A
  deep cox proportional hazards network. \emph{arXiv preprint arXiv:1606.00931}
   2016; .

\bibitem{Tallis}
Tallis GM. Plane truncation in normal populations. \emph{Journal of the Royal
  Statistical Society. Series B (Methodological)}  1965;
  \textbf{27}(2):301--307.

\end{thebibliography}

\newpage

 \appendix
 
 \section{Regular Treatment Assignment}
 \label{s:reg_assign}
 
 For completeness, we define the three assertions of a 
 \emph{regular} treatment assignment (cf.\,\cite{imbensrubinbook}).
 With notation  
 $\mathbf{X} = (X_1,\dotsc,X_N)$, 
 $\mathbf{T} = (T_1,\dotsc,T_N)$,
 $\mathbf{Y}(1) = (Y_1(1),\dotsc,Y_N(1))$,
 $\mathbf{Y}(0) = (Y_1(0),\dotsc,Y_N(0))$, 
 the unit level assignment probability for unit $i$ is defined by
 \begin{equation}
 p_i(\bX,\bY(1),\bY(0)) = \sum_{\bT :\, T_i=1} P(\bT \given \bX, \bY(1), \bY(0)) \,.
 \end{equation}
 \emph{Individualistic} assignment asserts that 
 (i) the probability of assignment to treatment for unit $i$ is some common function $q$
 of unit $i$'s covariates and potential outcomes, that is,
 \begin{equation}
 p_i(\bX,\bY(1),\bY(0)) = q(X_i,Y_i(1),Y_i(0))
 \end{equation}
 for all $i=1,\dotsc,N$,
 and (ii)
 \begin{equation}
 P(\bT \given \bX,\bY(1),\bY(0)) 
 = C \prod_{i=1}^N q(X_i,Y_i(1),Y_i(0))^{T_i}[1-q(X_i,Y_i(1),Y_i(0))]^{1-T_i}
 \end{equation}
 for $(\bX,\bT,\bY(1),\bY(0)) \in \mathbb{A}$, for some set $\mathbb{A}$, and is zero elsewhere.
 \emph{Probabilistic} assignment asserts that $0 < p_i(\bX,\bY(1),\bY(0)) < 1$ everywhere, 
 for all $i = 1,\dotsc,N$.
 \emph{Unconfounded} assignment asserts that 
 $P(\bT \given \bX, \bY(1), \bY(0)) = P(\bT \given \bX, \bY'(1), \bY'(0))$
 for all $\bX, \bY(1),\bY(0),\bY'(1),\bY'(0)$.
 
 Together, these assertions imply the joint probability
 \begin{equation}
 P(\bX,\bT,\bY(1),\bY(0)) = P(\bX,\bY(1),\bY(0)) \prod_{i=1}^N \pi(X_i)^{T_i}[1-\pi(X_i)]^{1-T_i} \,.
 \end{equation}
 

 %
 %
 %
 %
 
 \section{Proof of Lemma \ref{lemma:normalRegression}}\label{appendix:proofLemma}

 Note that $T_i$ is determined as
 \begin{equation} 
 T_i = \begin{cases} 
 1 :& \alpha_0 + X_i \alpha + w > 0 \\
 0 :& \text{otherwise}
 \end{cases},
 \end{equation}
 where $w\sim \mathrm{Logistic}(0,1)$.
 
 From \cite{Tallis}, we have
 \begin{eqnarray*}
 \E[X^T \given X c > p] &=& \left(\Phi(p/\gamma) \gamma\right)^{-1} \phi(p/\gamma) \Sigma c,\\
 \E[XX^T \given Xc > p] &=& \Sigma + \Sigma c c^T \Sigma \left(\Phi(p/\gamma) \gamma^2\right)^{-1}\phi(p/\gamma)  \Big(p/\gamma - \phi(p/\gamma)/\Phi(p/\gamma) \Big),
 \end{eqnarray*}
 where $\gamma = (c^* \Sigma c)^{1/2}$, 
 $\Phi$ is the cdf of a 1D normal random variable, 
 and $\phi$ is the pdf of a 1D normal random variable.  For simplicity, we replace these constants as
  \begin{align}
 	\E[X^T \given X c > p] &= C_{\Sigma,c,p}\cdot \Sigma c, \label{eq:truncExp}\\
 	\E[XX^T \given Xc > p] &= \Sigma + C'_{\Sigma,c,p} \cdot \Sigma c c^T \Sigma \label{eq:truncCov}
 \end{align}

With these in mind, we consider the expectation over $(\bX^T \bX)^{-1} \bX^T \bT$, 
%
\begin{align}
\E [ (\bX^T \bX)^{-1} \bX^T \bT ] 
&= \E_{X \mid T=1} [ (X^T X)^{-1} X^T \given T=1 ] \\
&= \E_w \E_{X \mid \alpha_0 + X \alpha + w > 0} [ (X^T X)^{-1} X^T \given \alpha_0 + X \alpha + w > 0] \\
&\stackrel{p}{\rightarrow} \E_w \Big(\E_{T=1}[X^T X]\Big)^{-1} \E_{T=1}[X^T],\label{eq:truncProd}
\end{align}
where the last convergence step comes from the continuous mapping theorem.  This means we can treat each term separately and consider the product at the end.  By \eqref{eq:truncExp}, the right term gives us
\begin{align}
\E_{T=1}[X^T] 
&= \E[X^T \given  X \alpha > -\alpha_0 - w]\\
&= C_{\Sigma,\alpha,-\alpha_0-w} \cdot \Sigma \alpha,
\end{align}
and the left term gives us
\begin{align}\label{eq:truncCovAlpha}
\E_{T=1}[X^T X] 
&= \E[X^T X \given  X \alpha > -\alpha_0 - w]\\
&= \Sigma + C'_{\Sigma,\alpha,-\alpha_0-w} \cdot \Sigma \alpha \alpha^T \Sigma.
\end{align}
Now we consider the inversion of \eqref{eq:truncCovAlpha}, which gives us
\begin{align}
\Big(\E_{T=1}[X^T X]\Big)^{-1} &= \Sigma^{-1} -  c\Big(I + c\cdot \alpha \alpha^T \Sigma\Big)^{-1} \alpha \alpha^T,
\end{align}
for constant $c$ depending on $\Sigma,\alpha,\alpha_0,w$, by the Binomial inverse theorem (a generalization of the matrix inversion lemma because $\alpha\alpha^T$ isn't invertible).  

We now consider the product in \eqref{eq:truncProd},
\begin{align}
\Big(\E_{T=1}[X^T X]\Big)^{-1} \E_{T=1}[X^T] 
&= \bigg[\Sigma^{-1} -  c\Big(I + c\cdot \alpha \alpha^T \Sigma\Big)^{-1} \alpha \alpha^T\bigg] C_{\Sigma,\alpha,-\alpha_0-w}\cdot  \Sigma \alpha\\
&= C_1 \alpha - C_2 \Big(I + c\cdot \alpha \alpha^T \Sigma\Big)^{-1} \alpha \cdot (\alpha^T \Sigma \alpha) \\
&= C_1 \alpha - C_3 \Big(I + c\cdot \alpha \alpha^T \Sigma\Big)^{-1} \alpha,
\end{align}
where $C_3$ now depends on $\alpha$.

Now we assume the support of $\alpha$ is sparse, so we can write without loss of generality 
\begin{align}
\alpha &= \begin{bmatrix} A \\ 0 \end{bmatrix} \\
I + c\cdot \alpha \alpha^T \Sigma &= \begin{bmatrix}
I + c AA^T \Sigma_{11} & c AA^T\Sigma_{12} \\
0 & I
\end{bmatrix},
\end{align}
where $\Sigma = \begin{bmatrix} \Sigma_{11} & \Sigma_{12}\\\Sigma_{21} &\Sigma_{22}\end{bmatrix}$.  We know from block $2\times 2$ matrix inversion, 
\begin{align}
\begin{bmatrix}
I + c AA^T \Sigma_{11} & c AA^T\Sigma_{12} \\
0 & I
\end{bmatrix}^{-1} &= \begin{bmatrix}
(I + c AA^T \Sigma_{11})^{-1} & D\\0 & I
\end{bmatrix},
\end{align}
for $D$ written as a function of $I,A,\Sigma$.  Collecting all terms, this means
\begin{align}
\E [ (\bX^T \bX)^{-1} \bX^T \bT ] 
&\stackrel{p}{\rightarrow}  E_w\bigg( C_1 \begin{bmatrix} A\\0\end{bmatrix} - C_3 \begin{bmatrix}
(I + c AA^T \Sigma_{11})^{-1} A\\0
\end{bmatrix} \bigg) \\
&=\begin{bmatrix}
c\cdot  A - c'\cdot (I + c AA^T \Sigma_{11})^{-1} A\\0
\end{bmatrix}.
\end{align}

Thus completes the proof that, if $\alpha_i=0$, then the same is true for the $i^{th}$ element of the treatment regression coefficients.


 

 \section{Proof of Theorem \ref{thm:noiseFolding}}
 
 
 We begin by reforming the expected value of the Mahalanobis distance between a points $x$ and $y$ where $T_x = 1$ and $T_y = 0$.  This yields
 \begin{eqnarray*}
 	\E\left[ d_R(x, y)^2\right] &=&\E \left[ (x - y)^* \Sigma^{\dagger} (x - y) \right]\\
 	&=& \E \left[\sum_{j,k} (\Sigma^{\dagger})_{j,k} (x_j - y_j) (x_k - y_k) \right] \\
 	&=&\sum_{j,k} (\Sigma^{\dagger})_{j,k} \E \left[(x_j - y_j)(x_k - y_k)\right].
 \end{eqnarray*}
 
 With $x_j \sim U([\eta_j,\eta_j+1])$, $y_j \sim U([0,1])$, and $\xi_j := x_j - y_j$,
 we have
 $\E[\xi_j] = \eta_j$,
 $\E[\xi_j \xi_k] = \E[\xi_j] \E[\xi_k] = \eta_j \eta_k$,
 and $\E[\xi_j^2] = \frac{1}{6} + \eta_j^2$ by integration.
 
 Now consider two points $X_i$ and $X_{m(i)}$ that are a perfect matching, so $(X_i)_k = (X_{m(i)})_k$ for $k\in K$.  Without loss of generality, assume $(X_i)_k = 0$ and assume $K = \{1,..., d\}$.  Then
 \begin{eqnarray*}
 	\E\left[d_R(X_i, X_{m(i)})^2\right] &=& \sum_{j,k} (\Sigma^{\dagger})_{j,k} \E \left[((X_i)_j - (X_{m(i)})_j)((X_i)_k - (X_{m(i)})_k)\right]\\
 	&=& \sum_{j,k \not\in K} (\Sigma^{\dagger})_{j,k} \E \left[((X_i)_j - (X_{m(i)})_j)((X_i)_k - (X_{m(i)})_k)\right] \\
 	&=& \sum_{j \not\in K} (\Sigma^{\dagger})_{j,j}  \E ((X_i)_j - (X_{m(i)})_j)^2 + ...\\
 	&&\sum_{j,k \not\in K, j\neq k} (\Sigma^{\dagger})_{j,k} \E \left[((X_i)_j - (X_{m(i)})_j)((X_i)_k - (X_{m(i)})_k)\right] \\
 	&=& \sum_{j\not\in K} \frac{1}{6} (\Sigma^{\dagger})_{j,j} + 
 	\sum_{j,k\not\in K} (\Sigma^{\dagger})_{j,k} \eta_j \eta_k.
 \end{eqnarray*}

\end{document}